\newtheorem{definition}{Definition}
\newtheorem{claim}{Claim}
\newtheorem{lemma}{Lemma}
\newtheorem{observation}{Observation}
\newtheorem{corollary}{Corollary}
\newtheorem{proposition}{Proposition}
\title{
Revisiting the Complexity Analysis of Conflict-Based Search: \\
New Computational Techniques and Improved Bounds
}
\author{
    Ofir Gordon,
    Yuval Filmus,
    Oren Salzman
    \\
}
\begin{document}
\maketitle

\begin{abstract}
The problem of Multi-Agent Path Finding (MAPF) calls for finding a set of conflict-free paths for a fleet of agents operating in a given environment. Arguably, the state-of-the-art approach to computing optimal solutions is Conflict-Based Search (\cbs). In this work we revisit the complexity analysis of \cbs to provide tighter bounds on the algorithm's run-time in the worst-case. Our analysis paves the way to better pinpoint the parameters that govern (in the worst case) the algorithm's computational complexity.
Our analysis is based on two complementary approaches:
In the first approach we bound the run-time using the size of a Multi-valued Decision Diagram (\mdd)---a layered graph which compactly contains all possible single-agent paths between two given vertices for a specific path length.
In the second approach we express the running time by a novel recurrence relation which bounds the algorithm's complexity. We use generating functions-based analysis in order to tightly bound the recurrence.
Using these technique we provide several new upper-bounds on \cbs's complexity. The results allow us to improve the existing bound on the running time of \cbs for many cases. For example, on a set of common benchmarks we improve the upper-bound by a factor of at least $2^{10^{7}}$.
\end{abstract}

\section{Introduction} \label{sec:intro}
The \emph{Multi-Agent Path Finding problem} (\emph{MAPF})
is a well-studied problem, which attracts high interest among the robotics and AI community. It can be used to model many real-life applications, from automated warehouses \cite{warehouse_use}, through computer games \cite{Sturtevant2012} and to autonomous vehicles~\cite{vehicles_use}. Therefore, many efforts are invested in order to solve the problem as efficiently as possible, under different models and for different scenarios.

In the general version of MAPF \cite{Stern2019} we are given a graph~${G=(V,E)}$ with $n$ vertices and a set of $\agents$ agents~${A=\left\{a_1, a_2, ..., a_{\agents}\right\}}$. Each agent $a_i$ is provided with a start and a goal location, $(s_i,g_i)$ s.t.~${s_i,g_i \in V}$. Time is discretized and at every time-step an agent can either \emph{wait} in its current location or \emph{move} across an edge to an adjacent vertex. A \emph{feasible} solution is a paths set~${\mathcal{P}=\left\{p_1,p_2,...,p_{\agents}\right\}}$ such that $p_i$ is a path for agent $a_i$ from $s_i$ to $g_i$, and there is no conflict between any two paths in $\mathcal{P}$. We consider two types of conflicts---a \emph{vertex-conflict}, in which two agents occupy the same vertex at the same time-step, and an \emph{edge-conflict}, in which two agents traverse the same edge from opposite sides at the same time-step. An \emph{optimal} solution is a paths set~$\mathcal{P}$ which also optimizes some objective function. Arguably, the most common objective functions used for MAPF are:
\begin{enumerate}
    \item \emph{Makespan}--where we want to minimize the time in which the last agents arrives to its goal.
    \item \emph{Sum-of-Costs}--where we want to minimize the combined time it took for all agents to arrive at their goals.
\end{enumerate}

The task of finding an optimal solution is known to be NP-hard \cite{Yu2016} for both aforementioned objectives. The problem remains NP-hard even when~$G$ is a sub-graph of a planar grid graph \cite{Banfi2017}. Nevertheless, state-of-the-art optimal algorithms are able to effectively solve many non-trivial instances. Arguably, the most commonly-used algorithm for solving MAPF optimally is \emph{Conflict-Based Search} ($\cbs$) \cite{Sharon2015}. $\cbs$ first plans an initial (possibly infeasible) solution, and then systematically identifies and resolves conflicts. After $\cbs$ finds a conflict it applies a constraint that prohibits a conflicted agent from being in the location of the conflict at that certain time-step. Studies that followed, introduce different techniques that allow to empirically improve the algorithm's run-time \cite{Boyarski2015, Felner2018, Li2019heuristics, Zhang2020}.

Despite the ability of those techniques to cope with a wide range of non-trivial instances, there are many cases where $\cbs$ and its improvements cannot solve the problem even when allowed very long running times \cite{Kaduri2020}. Interestingly, many such empirically-hard instances do not exhibit notable differences from easy ones and identifying the exact source of (theoretical and empirical) hardness is an open question \cite{SalzmanS20}.

The original exposition of $\cbs$ \cite{Sharon2015} presented a (loose) upper-bound on the algorithm's complexity that is exponential both in the problem's parameters (number of agents $k$ and the number of vertices $n$ in the graph~$G$) and in the cost of an optimal solution. In this work we tighten this upper-bound and provide a new point-of-view on the analysis of a worst-case scenario for the algorithm. We believe that this is a first step towards improving our understanding on the problem's hardness which, in turn, will allow to design algorithms that can solve a wider range of instances.

We suggest two novel approaches to improve the algorithm's worst-case complexity analysis. 
In the first approach we improve the (existing) upper-bound on the number of possible constraints that \cbs might need to apply in order to find a solution. We do it by bounding the size of a \emph{Multi-valued Decision Diagram} (\mdd)---a layered graph which compactly contains all possible paths between two vertices for a specific path length \cite{Sharon2013}. 
In the second approach we express the run-time of \cbs using a novel recurrence relation which bounds the algorithm's complexity. We compute the \emph{generating function}~\cite{generatingFuncions} of the recurrence and use it to tightly bound its value in order to obtain a tighter bound on the complexity of \cbs.

Combining the results from both approaches provides us with new tighter upper-bounds for the worst-case complexity of \cbs. Beyond the new bounds, we anticipate that the computational tools we introduce will allow to obtain future improvements to the upper-bound. For example, this can be obtained via tighter bounds on the recurrence relation, or by improving the analysis of an \mdd size.

\section{Setting and Background}
Given an optimal solution $\mathcal{P}$, denote by $\mathcal{T}(p)$ the time that a single-agent's path $p \in \mathcal{P}$ terminates (note that wait moves are counted as a timestep in a path $p$). Now, set $\optcost$ to be the latest  time that a single-agent's path $p \in \mathcal{P}$ terminates. Namely,~${\optcost = \max_i \left\{\mathcal{T}(p_i)\right\}}$.

Under the minimal-makespan objective, $\optcost$ constitutes the cost of the optimal solution $\mathcal{P}$. Thus, for the rest of this paper we will consider $\optcost$ as the cost of an optimal solution of the problem, to be used in the complexity analysis.
Notice that~${\agents \optcost}$ constitutes an upper-bound on the cost of an optimal solution for the sum-of-costs objective. We further discuss the applicability of our results for the sum-of-costs objective in Sec.~\ref{sec:discussion}.

\subsection{CBS and its Complexity Analysis} \label{sec:background_cbs}
\emph{Conflict-Based Search} ($\cbs$) \cite{Sharon2015} is a two-level search algorithm which works as follows---first it finds an optimal path for each agent independently using some single-agent search algorithm like $\astar$ \cite{Hart1968}. $\cbs$ then works to resolve conflicts that occur in the solution: in the high-level search it preforms a best-first search upon a constructed conflicts-tree~($\ct$). Each node in the $\ct$ consists of a \emph{solution}, the solution's \emph{cost} and a set of \emph{constraints} imposed on the agents. A constraint is either a \emph{vertex-constraint} of the form $\negconst{a}{v}{t}$, which prohibits agent $a$ from being at vertex $v$ at time-step $t$, or an \emph{edge-constraint} of the form $\edgeconst{a}{u}{v}{t}$, which prohibits agent $a$ from crossing the edge $(u,v)$ at time-step $t$. We refer to such constraints as \emph{negative constraints}.

In each iteration, \cbs selects an unexpanded \ct node with a lowest cost. It then finds a conflict that occurs between two agents in the node's solution. It splits the \ct node into two child-nodes, each with a constraint on one of the agents that were involved in the conflict. It then runs the low-level search to construct a new solution in each child node, that does not violate the new constraint, by running a single-agent search algorithm like~$\astar$.

The basic version that we just presented was recently improved using \emph{positive constraints} \cite{Li2019Disjoint}. In a positive constraint $\posconst{a}{v}{t}$, agent $a$ is required to be at vertex $v$ at time-step $t$. When using positive constraint with \cbs, a \ct node is split into two child nodes using a positive and negative constraints forcing and forbidding the conflicted agent to be at a vertex or edge at a certain time-step, respectively.

An analysis of the worst-case time-complexity of \cbs was originally presented by \citet{Sharon2015}. They show that \cbs's complexity can be decomposed to bounding the size of the \ct and the complexity of the single-agent search in each low-level iteration. 
We refer to the size of the \ct in a worst-case scenario as the \emph{high-level search complexity}.
The low-level search complexity corresponds to running $\astar$ for a single agent. For the rest of this paper we focus on analyzing the high-level search complexity, thus, a complete upper-bound on \cbs's complexity can obtained by simply multiplying any of the following results with the complexity of a single agent's $\astar$-search.

The original analysis uses the assumption that each agent can potentially be in every vertex at every time-step. This bounds the number of (negative) constraints that \cbs might need to apply by $\bigo(n \agents \optcost)$. At each \ct node exactly one constraint is added. Thus, the number of possible constraints bounds the depth of the \ct, and gives an overall bound on \cbs's running time of~${\bigo(n \optcost \cdot 2^{\agents n \optcost})}$.%
\footnote{The original paper contains a minor error in the calculation of the upper bound. The bound presented here is the new bound whose validity was verified with one of the authors.
Similarly, the oversight regarding not accounting for edge constraints (explained shortly) was also discussed and verified with one of the authors in the original \cbs paper.} 
In the rest of this paper, we refer to this analysis and bound as the \emph{original analysis} and \emph{original upper-bound}, respectively.

It is important to note that the original analysis does not account for the possibility that \cbs would apply edge-constraints (in order to resolve conflicts).
It is possible that in a worst-case scenario the algorithm would require not only to prevent any agent from occupying any vertex in the graph at each time-step, but also from crossing each edge of the graph, in order to find the optimal solution. 
Therefore, accounting for edge constraint should further increase the theoretical upper-bound.
For clarity of exposition, we present our tools for analysing \cbs's complexity with the same assumption, i.e., that only vertex-constraints are considered.
Nevertheless, we address this issue in Sec.~\ref{sec:discussion} and show how to incorporate edge-constraints in the complexity analysis.

\subsection{Multi-valued Decision Diagram (MDD)}
The \emph{multi-valued decision diagram} $\mdd_i^C$ is a layered graph that consists of  $\optcost$ layers, which compactly contains all possible paths of agent $a_i$ of cost at most $\optcost$ from $s_i$ to $g_i$ \cite{Sharon2013}. 
A vertex~${v \in V}$ appears at the $t$'th layer of $\mdd_i^{\optcost}$ if it is reachable from $s_i$ and $g_i$ in $t$ and~${C-t}$ steps, respectively. 
Finally, the \emph{size} $\mddsize$ of an \mdd, represents the total number of \mdd nodes and the size $\mddsize_t$ of the $t$'th layer is the number of \mdd nodes in that layer.

$\mdd$ graphs are commonly-used for different purposes in MAPF algorithms, since they can be constructed efficiently for a given cost and their compact representation contains information that can help improve the identification and classification of conflicts \cite{Li2019heuristics, Zhang2020}. In this work we use $\mdd$s to bound the number of possible constraints that might need to be applied on a single agent during a $\cbs$ execution.

\subsection{Generating Functions for Bounding Recurrence Relations} \label{sec:gen-func-analysis}
Generating functions are a well-known mathematical tool which, among other things, can be used to bound recurrence relations. 
Formally, a generating function of a sequence~${a_0, a_1, a_2,\dots}$ with the general element denoted by~$a_r$, is the function~${\funcone{F}{x} = \sum_{r \geq 0} {a_r x^r}}$, i.e., the sequence elements are the coefficients of the series expansion of~$\funcone{F}{x}$. This notion can be extended for a sequence (or recurrence relation) with multiple variables. For instance, given a recursion~$\functwo{T}{r}{s}$ which defines a sequence, a possible generating function for it will be of the form~${\functwo{F}{x}{y} = \sum_{r,s \geq 0} {\functwo{T}{r}{s} x^r y^s}}$.
For further details on generating functions see, e.g., the book by~\citet{generatingFuncions}.

Given a generating function for a specific sequence, there can be many methods which allow to utilize the function in order to bound the value of the sequence at a certain index. 
These different methods are dependent on the sequence and the obtained function and there is no guarantee that a certain method could always be applied for this purpose.

\citet{PemantleW08} provide one approach for dealing with recursions of multiple variables which we briefly describe (additional details are presented mainly in Sec.~3 of the aforementioned paper) as it will be a key technique used to obtain our complexity bounds.
Assume that we are given a recursion~$\functwo{T}{r}{s}$ and a matching generating function for it~${\functwo{F}{x}{y} = \sum_{r,s \geq 0} {\functwo{T}{r}{s} x^r y^s}}$ that can be expressed by the following form: $\functwo{F}{x}{y} = \frac{\functwo{G}{x}{y}}{\functwo{H}{x}{y}}$. Denote by $H_z$ the partial derivative of $H$ for $z$ (where $z$ can be a sequence of $x$ and $y$).
The first step calls for finding \emph{critical points}, which are given by the solutions in the positive quadrant (i.e., $x,y \geq 0$) for the following system:
\begin{equation} \label{eq:critical-points-system}
    \begin{cases}
        H = 0 \\
        s x \partderv{H}{x} = r y \partderv{H}{y}.
    \end{cases}
\end{equation}
Denote the critical points by $q_1, q_2,\dots,q_m$. 
Each point~${q_i = \point{x_i}{y_i}}$ \emph{contributes} a certain factor to the approximation of $\functwo{T}{r}{s}$, and this contribution can be calculated according to the point's multiplicity. The exact way each point contributes to the bound is detailed by \citet{PemantleW08} and in \arxiv{the extended version of this paper~\cite{ExtendedVersion}}{Appendix.~\ref{appendix:gen-func-method}}.

Assume that the contribution of $q_i$ is given by 
$\functwo{T_i}{r}{s}$ for each $1 \leq i \leq m$, then the analysis suggests that the asymptotic growth of~$\functwo{T}{r}{s}$ can be tightly approximated by one of the factors which is given by the critical point's contribution.


\section{CBS's Complexity Analysis using MDDs} \label{sec:mdd}
Recall that the original analysis was obtained by bounding the number of possible (vertex) constraints that \cbs may apply.
In addition, as explained in Sec.~\ref{sec:background_cbs}, the original analysis did not account for edge constraints as a possible mean that can be used by the algorithm. 
We temporarily  limit our analysis to account for vertex constraints only and defer handling edge constraints to  Sec.~\ref{sec:discussion}.

We suggest a new approach to bound this number of possible constraints that \cbs may apply, using the following observation:

\begin{observation} \label{obs:const_mdd_bound}
    Given an agent $a_i$ and an optimal solution's cost $\optcost$, the maximal number of negative constraints that $\cbs$ may apply on $a_i$ is bounded by the size of $\mdd_i^{\optcost}$.
\end{observation}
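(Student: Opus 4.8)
The plan is to argue that every negative constraint $\cbs$ ever imposes on agent $a_i$ corresponds to a node of $\mdd_i^{\optcost}$, and that this correspondence is injective, so the number of such constraints cannot exceed $\mddsize$. The key structural fact is that $\cbs$ only ever searches for paths of $a_i$ whose cost is at most $\optcost$: the high-level search is a best-first search on cost, and since an optimal (makespan-$\optcost$) solution exists, no $\ct$ node of cost exceeding $\optcost$ is ever expanded, hence the low-level search for $a_i$ is always confined to paths of length at most $\optcost$ from $s_i$ to $g_i$.

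First I would recall that $\cbs$ adds a new constraint on $a_i$ only when resolving a conflict that appears in some current solution path $p_i$ of $a_i$. A vertex-conflict at vertex $v$ and time-step $t$ yields the constraint $\negconst{a_i}{v}{t}$; by the previous paragraph $v$ lies on a path of $a_i$ of cost at most $\optcost$, so $v$ is reachable from $s_i$ in $t$ steps and reachable backward from $g_i$ in at most $\optcost - t$ steps. By the definition of the $\mdd$ layers, this is exactly the condition for $v$ to appear in layer $t$ of $\mdd_i^{\optcost}$; so the constraint $\negconst{a_i}{v}{t}$ is naturally identified with the node ``$v$ in layer $t$'' of $\mdd_i^{\optcost}$. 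Next I would note that two distinct constraints on $a_i$ of the form $\negconst{a_i}{v}{t}$ are distinguished by the pair $(v,t)$ — the algorithm never adds the same constraint twice along a root-to-node branch of the $\ct$, and a branch is what determines one low-level search instance — and distinct pairs $(v,t)$ map to distinct $\mdd$ nodes. Hence the map from constraints-on-$a_i$ to nodes of $\mdd_i^{\optcost}$ is injective, giving the bound $\mddsize$.

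The main obstacle I anticipate is making precise the claim that $\cbs$ never needs to consider a path of $a_i$ of cost greater than $\optcost$, and therefore never constrains $(v,t)$ for a pair outside the $\mdd$. This requires the makespan objective (so that ``cost $\le \optcost$'' is literally the bound on every individual agent's path length) together with the optimality/best-first property of the high-level search; I would spell out that any $\ct$ node containing a path of $a_i$ longer than $\optcost$ has cost at least $\optcost+1 > \optcost$, so it is provably dominated by the $\ct$ node on the path to the optimal solution and is never expanded — consequently no conflict is ever detected in such a node and no corresponding constraint is generated. A secondary subtlety is that the statement counts constraints \emph{$\cbs$ may apply on $a_i$} across the whole run, whereas within any single $\ct$ branch only a subset is active; but since the bound $\mddsize$ already covers the union of all admissible $(v,t)$ pairs, it bounds the per-branch count and hence the depth contribution of $a_i$'s constraints, which is what the later analysis needs.
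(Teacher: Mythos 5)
Your argument is correct and is essentially the paper's own: the paper justifies the observation in one sentence by noting that a constraint $\negconst{a_i}{v}{t}$ can only arise if $v$ is reachable from $s_i$ in $t$ steps and from $g_i$ in $\optcost-t$ steps, which is precisely the definition of a node of $\mdd_i^{\optcost}$, so constraints inject into \mdd nodes. Your additional detail --- that the best-first high-level search never expands a \ct node of cost exceeding $\optcost$, so $a_i$'s path (under the makespan objective) never exceeds length $\optcost$ --- is a correct and welcome elaboration of the premise the paper takes for granted.
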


Obs.~\ref{obs:const_mdd_bound} holds as \cbs may only apply a constraint on agent~$a_i$ at vertex $v$ for time-step $t$ if $a_i$ can reach $v$ from~$s_i$ within $t$ time-steps and still reach $g_i$ in $C-t$ time-steps, which is the exact definition of an \mdd node.

From Obs.~\ref{obs:const_mdd_bound} we obtain the following corollary:
\begin{corollary} \label{cor:mdd_bound}
    Let $\mddsize$ denote the maximal size of an agent's \mdd in a given instance. The size of $\cbs$'s conflict-tree is bounded by~${\bigo(2^{\agents \mddsize})}$ for any execution of the algorithm on this instance. This implies a similar bound on the algorithm's high-level search complexity.
\end{corollary}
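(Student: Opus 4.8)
The plan is to combine Observation~\ref{obs:const_mdd_bound} with the structural fact that \cbs's conflict-tree is a binary tree whose depth is controlled by the total number of distinct constraints that can ever be applied. First I would fix an arbitrary execution of \cbs on the given instance and examine the associated conflict-tree \ct. Every non-root node of \ct is obtained from its parent by adding exactly one new (negative) constraint, and every node splits into at most two children, so \ct is a binary tree. Furthermore, along any root-to-node path no constraint is added twice: re-adding a constraint already present in the ancestor set would yield a child with an identical constraint set and is never done. Hence the depth of \ct is at most the number of \emph{distinct} constraints that \cbs could possibly apply in this instance.

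Next I would bound that number of distinct constraints. Since we are (for now) restricting attention to vertex constraints, every constraint has the form $\negconst{a_i}{v}{t}$ for some agent $a_i$, vertex $v$, and time-step $t \le \optcost$. By Observation~\ref{obs:const_mdd_bound}, the number of such constraints \cbs may apply on a single agent $a_i$ is at most the size of $\mdd_i^{\optcost}$, and by definition this is at most $\mddsize$. Summing over the $\agents$ agents, the total number of distinct constraints is at most $\agents \mddsize$, so the depth of \ct is at most $\agents \mddsize$.

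Finally, a binary tree of depth $d$ has at most $2^{d+1}-1$ nodes, so $|\ct| \le 2^{\agents \mddsize + 1} - 1 = \bigo(2^{\agents \mddsize})$. Since this holds for every execution of \cbs on the instance, it bounds the number of \ct nodes generated, i.e.\ the high-level search complexity, up to the per-node low-level search cost that we have already factored out; this yields the claimed bound.

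I expect the main point to watch — more a subtlety than a deep obstacle — to be the ``no constraint repeats along a root-to-node path'' step, since this is exactly what converts a count of \emph{possible} constraints into a \emph{depth} bound. One should also check that the argument survives the disjoint-splitting variant, where a split adds a positive constraint $\posconst{a_i}{v}{t}$ to one child: the relevant positive constraints are likewise indexed by \mdd nodes (a positive constraint placed outside agent $a_i$'s \mdd makes that child infeasible and terminates the branch), so the per-agent count, and hence the depth bound, is unaffected.
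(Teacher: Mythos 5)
Your proof is correct and follows essentially the same route as the paper: the paper derives this corollary by observing that each \ct node adds one new constraint, so the number of possible constraints (now bounded per agent by the \mdd size via Obs.~\ref{obs:const_mdd_bound}, hence $\agents\mddsize$ in total) bounds the depth of the binary conflict-tree, giving $\bigo(2^{\agents\mddsize})$ nodes. Your extra remarks on non-repetition along a branch and on the disjoint-splitting variant are sound but not needed for the corollary as stated, which (like the paper's analysis at this point) considers only negative vertex constraints.
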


Cor.~\ref{cor:mdd_bound} can be used to recover the original analysis of \cbs---a (loose) bound on $\mddsize$ can be obtained by bounding the size of any \mdd layer by $n$. Thus $\mddsize = \bigo(n \optcost)$ which gives us the original bound of~${\bigo(2^{\agents n \optcost})}$.  

We present two (tighter) bounds on $\mddsize$. The first (Sec~\ref{sec:mdd_size_worst_case}) removes the number of environment vertices from the complexity analysis, eliminating the possibility to deem a problem computationally hard just by adding inconsequential vertices to the environment.
See Fig.~\ref{fig:difficult_instance} for an example for such instance in which expressing the bound using the size of the graph might blow-up the result unnecessarily. In the presented instance, an optimal solution would require the agents to switch places in row ``F''. It can be observed that adding free nodes on the left, right or bottom of the grid won't affect the instance's difficulty.

The second bound accounts for the structure of $G$. In addition to the obtained bound, it demonstrates a complexity analysis restricted to a specific setting. This allows to (potentially) obtain tighter bounds on the size of an $\mdd$ which, in turn, provides tighter bounds on \cbs's complexity for a given setting of interest.

In the following sections, we assume that $G$ is a full~${\sqrt{n} \times \sqrt{n}}$ grid with no blocked vertices and~${s_i= g_i}$ for some agent $a_i$ (this serves as an upper bound on the size of~${\mdd_i^{\optcost}}$ for any other instance).

\begin{figure}[t]
    \centering
    \includegraphics[width=0.4\columnwidth]{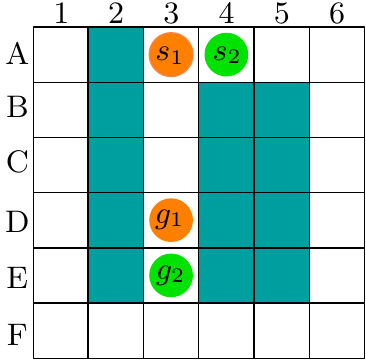}
    \caption{Empirically difficult instance for $\cbs$ with two agents. The colored squares represent blocked nodes.}
    \label{fig:difficult_instance}
\end{figure}

\subsection{Upper-Bound on the Size of an MDD} \label{sec:mdd_size_worst_case}
For the simplicity of the exposition, we restrict the discussion in this section to MAPF instances on infinite~{$\text{4-connected}$} grids \cite{Banfi2017, Stern2019}.
That is, we consider the setting where an agent can move in four directions from any vertex in the graph. 
Nonetheless, we emphasize that the technique we use to bound \cbs's complexity can be used to bound the size of an \mdd for any environment.

For any optimal path, an agent~$a_i$ can't be located at any vertex within distance larger than~${\optcost/2}$ from its start~$s_i$ or goal~$g_i$. This implies a symmetry on the structure of~${\mdd_i^{\optcost}}$---the last $\lfloor{\optcost/2}\rfloor$ layers form a mirror-image of the first $\lfloor{\optcost/2}\rfloor$ layers. The number of vertices on a grid which are reachable from $s_i$ within exactly $t$ steps is~$4t$ (see Fig.~\ref{fig:grid_distances}). At time-step $t$, $a_i$ can be located at any vertex within distance at most~$t$ from $s_i$. Therefore, we sum the number of reachable vertices in the range from one to~$t$ (excluding~$s_i$). For any $t \leq C/2$ the size of the $t$'th layer in~${\mdd_i^{\optcost}}$ is:

\begin{equation} \label{eq:mdd_layer}
    \mddsize_t \leq \sum_{i=1}^{t} {4i} = {2t(t+1)}.
\end{equation}
Given the aforementioned symmetry:
\begin{equation} \label{eq:mdd_size}
    \mddsize \leq 2 \cdot \sum_{t=1}^{{C/2}} {2t(t+1)} = \frac{C^3+6C^2+8C}{6} = \bigo(C^3).
\end{equation}

We assume for simplicity that $\optcost$ is even, if $\optcost$ is odd then the size of the middle layer ($\bigo(\optcost^2)$ according to Eq.~\ref{eq:mdd_layer}) needs to be added to the result of Eq.~\ref{eq:mdd_size}.

By placing the result from Eq.~\ref{eq:mdd_size} in Cor.~\ref{cor:mdd_bound} we obtain that:

\begin{claim} \label{claim:first_bound}
    The high-level search complexity of $\cbs$ on grid graphs is bounded by~${\bigo\left(2^{\agents \optcost^3}\right)}$.%
    \footnote{Note that in general it does not hold that $2^{\bigo(m)} = \bigo{\left(2^m\right)}$. The reason for which it does hold in this case is since the hidden constant in the Big-O notation in Eq.~\ref{eq:mdd_size} is smaller than 1.}
\end{claim}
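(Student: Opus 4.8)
The plan is to derive the claim as an almost immediate consequence of Corollary~\ref{cor:mdd_bound} together with the grid-specific size bound of Eq.~\ref{eq:mdd_size}. Corollary~\ref{cor:mdd_bound} already reduces the high-level search complexity to $\bigo(2^{\agents \mddsize})$, where $\mddsize$ is the maximal size of a single agent's \mdd for the given instance; so it suffices to show that on a grid $\mddsize$ is a cubic in $\optcost$ whose leading coefficient is strictly below $1$. Concretely, I would argue $\mddsize \le \tfrac{1}{6}\optcost^3 + O(\optcost^2)$ and then let the factor $1/6 < 1$ do the work in the final step.

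To bound $\mddsize$ I would proceed as in Eqs.~\ref{eq:mdd_layer}--\ref{eq:mdd_size}. First, by the reduction stated just before the claim it suffices to consider the worst case where $G$ is a full $\sqrt{n} \times \sqrt{n}$ (or infinite) $4$-connected grid with no blocked cells and $s_i = g_i$, since $\mdd_i^{\optcost}$ for this instance dominates that of any other instance. Next, on a $4$-connected grid the set of cells at $L^1$-distance exactly $t$ from a fixed cell is the ``diamond'' of radius $t$, which has $4t$ cells for $t \ge 1$; hence the number of cells within distance $t$ is $1 + \sum_{i=1}^{t} 4i = 2t(t+1) + 1$. For $t \le \optcost/2$ this set is exactly the $t$'th layer of $\mdd_i^{\optcost}$: a cell can appear in layer $t$ iff it is within $t$ steps of $s_i$ and within $\optcost - t$ steps of $g_i = s_i$, and for $t \le \optcost/2$ the former is the binding condition. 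This gives $\mddsize_t \le 2t(t+1)$. Finally I would invoke the mirror symmetry noted in the text (no optimal path can leave the distance-$\optcost/2$ ball around $s_i$, so layer $t$ and layer $\optcost - t$ carry the same cells), sum $\mddsize_t$ over $t = 1,\dots,\optcost/2$, double, and --- if $\optcost$ is odd --- add the single middle layer of size $O(\optcost^2)$; this yields exactly $\mddsize \le \tfrac{\optcost^3 + 6\optcost^2 + 8\optcost}{6} = \tfrac{1}{6}\optcost^3 + O(\optcost^2)$.

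Plugging this into Corollary~\ref{cor:mdd_bound} gives a bound of the form $2^{\agents(\frac{1}{6}\optcost^3 + O(\optcost^2))}$. The only genuine subtlety --- and the step I expect to be the main obstacle --- is that $2^{O(m)} = \bigo(2^m)$ fails in general, so I cannot simply absorb the $O(\optcost^2)$ terms and the $1/6$ coefficient into the Big-O. Here it is legitimate precisely because the coefficient of the dominant term $\agents\optcost^3$ is $1/6 < 1$: for all sufficiently large $\optcost$ we have $\agents\big(\tfrac{1}{6}\optcost^3 + O(\optcost^2)\big) \le \agents\optcost^3$, hence $2^{\agents(\frac{1}{6}\optcost^3 + O(\optcost^2))} = \bigo(2^{\agents \optcost^3})$, which is the claimed bound on \cbs's high-level search complexity.
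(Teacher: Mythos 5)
Your proposal is correct and follows essentially the same route as the paper: bounding each \mdd layer by $2t(t+1)$ via the diamond of radius $t$ on a $4$-connected grid, exploiting the mirror symmetry to sum up to $\optcost/2$ and double (obtaining Eq.~\ref{eq:mdd_size}), and plugging the result into Cor.~\ref{cor:mdd_bound}. Your explicit justification of why $2^{\agents(\frac{1}{6}\optcost^3 + O(\optcost^2))} = \bigo(2^{\agents\optcost^3})$ via the leading coefficient $1/6 < 1$ is exactly the point the paper relegates to the footnote.
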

The bound in Claim.~\ref{claim:first_bound} provides a tighter estimation for the algorithm's complexity for any instance where:~${\optcost^3 < n \cdot \optcost \Rightarrow \optcost < \sqrt{n}}$. 
In addition, this result removes $n$ from the bound expression.

\begin{figure}[t]
    \centering
    \includegraphics[width=0.4\columnwidth]{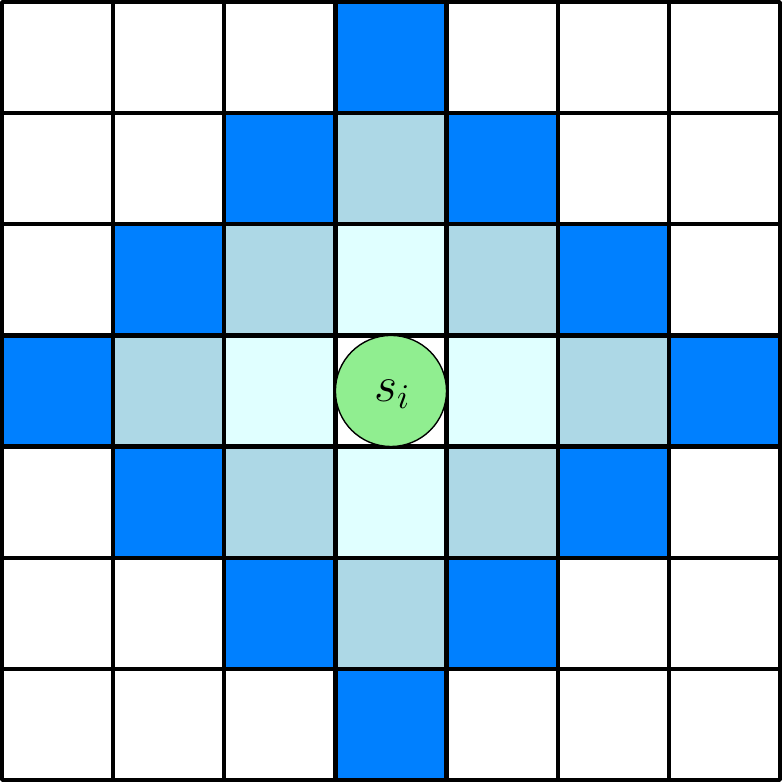}
    \caption{Illustration of vertices reachable for an agent $a_i$ located at $s_i$ within 1-3 time-steps, on a 4-connected grid.}
    \label{fig:grid_distances}
\end{figure}

\subsection{MDD Size Based on the Graph's Radius} \label{sec:mdd_based_radius}
\begin{definition} \label{def:radius}
The \textbf{distance} $dist(u,v)$ between two vertices $u,v$ in a graph is the number of edges on a shortest path between them. The \textbf{radius} of a graph is~${\radius = \min\limits_{u \in V}{\max\limits_{v \in V}{\{dist(u,v)\}}}}$. A vertex $u$ for which it holds that $\forall v \in V:\, dist(u,v) \leq \radius$ is called a \textbf{center} vertex.
\end{definition}

In a complete square grid of size~$n$, we have that~${\radius = \sqrt{n} - 1}$, with the center in the $\lceil{\frac{\sqrt{n}}{2}}\rceil$'th row and column for an odd value of $\sqrt{n}$. When $\sqrt{n}$ is even, there is no single center vertex. For simplicity, we assume that $\sqrt{n}$ is odd.

\begin{observation} \label{obs:radius}
 A layer of size $n$ exists in $\mdd_i^{\optcost}$ only if~${\optcost \geq 2 \radius}$ (note that a layer's size can't exceed $n$).
\end{observation}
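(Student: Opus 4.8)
The plan is to unpack the definition of \mdd\ membership under the standing assumption of this section, namely $s_i = g_i$. Recall that a vertex $v$ lies in the $t$'th layer of $\mdd_i^{\optcost}$ exactly when $v$ is reachable from $s_i$ within $t$ steps and $g_i$ is reachable from $v$ within $\optcost - t$ steps. Since wait moves are permitted in the MAPF model, ``reachable within $t$ steps'' is equivalent to $dist(s_i,v) \le t$; combining this with $s_i = g_i$ and the symmetry of graph distance, the condition for $v$ to appear in layer $t$ collapses to the single inequality $dist(s_i,v) \le \min\{t,\, \optcost - t\}$.

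Next I would observe that a layer of size $n$ is simply one that contains all $n$ vertices of $G$ (and, as noted in the statement, a layer can contain no more). Hence, if layer $t$ has size $n$, every $v \in V$ satisfies $dist(s_i,v) \le \min\{t,\, \optcost - t\}$, so the eccentricity of $s_i$ obeys $\max_{v \in V} dist(s_i,v) \le \min\{t,\, \optcost - t\} \le \optcost/2$. Finally, by Def.~\ref{def:radius} the radius is the minimum eccentricity over all vertices, so $\radius \le \max_{v \in V} dist(s_i,v) \le \optcost/2$, which rearranges to $\optcost \ge 2\radius$, as claimed.

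I do not expect a genuine obstacle here; the only point meriting care is the reading of ``reachable in $t$ steps'' in the definition of an \mdd. Because an agent may wait, this coincides with ``reachable within $t$ steps'' and no parity restriction on $t$ enters the argument; and even under a stricter exact-length reading the necessary condition for a full layer would only become harder to satisfy, so the stated lower bound on $\optcost$ would still follow.
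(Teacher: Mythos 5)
Your proof is correct and follows the same (implicit) reasoning the paper relies on for this unproved observation: a full layer $t$ forces every vertex to lie within distance $\min\{t,\optcost-t\}\le \optcost/2$ of $s_i$, so the eccentricity of $s_i$, and hence the radius, is at most $\optcost/2$. Your side remark on ``reachable in exactly $t$ steps'' versus ``within $t$ steps'' is also handled correctly, since wait moves make the two notions coincide.
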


Obs.~\ref{obs:radius} allows us to characterize settings for which it is possible to refine our previous analysis. For the first and last~$\radius$ layers of $\mdd_i^{\optcost}$ we bound a layer's size using Eq.~\ref{eq:mdd_layer}. Using~Obs. \ref{obs:radius}, the remaining layers are the only layers with size $n$. This gives us the following bound for $\mddsize$, for cases where~${\optcost=2\radius + \delta}$ for some $\delta \in \mathbb{N}$:
\begin{equation} \label{eq:mdd_radius}
\begin{split}
    \mddsize & \leq \delta n + 2 \cdot \sum_{t=1}^{\radius} {2t(t+1)} \\ &= \frac{4}{3} \cdot \radius (\radius+1) (\radius+2) + \delta n 
\end{split}
\end{equation}

The expression $\frac{4}{3} \cdot \radius (\radius+1) (\radius+2)$ is smaller than $2 \cdot \radius^3$ for any $\radius \geq 7$.
By placing the result from Eq.~\ref{eq:mdd_radius} in Cor.~\ref{cor:mdd_bound} we obtain that:
\begin{claim} \label{claim:radius}
    The high-level search complexity of $\cbs$ on grid graphs with radius $\radius \geq 7$ where~${\optcost=2\radius + \delta}$ for some $\delta \in \mathbb{N}$ is bounded by~${\bigo\left(2^{\agents \cdot (2\radius^3 + \delta n) }\right)}$.
\end{claim}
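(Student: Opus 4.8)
The plan is to bound the maximal \mdd size $\mddsize$ under the stated hypotheses and then invoke Cor.~\ref{cor:mdd_bound}. As in Sec.~\ref{sec:mdd_size_worst_case}, I would work with the worst-case instance --- a full square grid with $s_i = g_i$, so that $\mdd_i^{\optcost}$ is symmetric about its middle layer (the last $\lfloor \optcost/2 \rfloor$ layers mirror the first $\lfloor \optcost/2 \rfloor$). Since $\optcost = 2\radius + \delta$ with $\delta \in \mathbb{N}$, the layers split into three groups: the first $\radius$, the last $\radius$, and the $\delta$ in between. For a layer $t \le \radius$ the agent can only occupy vertices within distance $t$ of $s_i$, of which there are at most $\sum_{i=1}^{t} 4i = 2t(t+1)$ on a grid (a finite-grid ball is never larger than the corresponding infinite-grid ball), i.e.\ exactly the bound of Eq.~\ref{eq:mdd_layer}; by the mirror symmetry the same bound holds for the last $\radius$ layers. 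Each of the remaining $\delta$ layers is bounded trivially by $n$, and Obs.~\ref{obs:radius} is precisely what certifies that these middle layers are the only ones that can attain size $n$, which is what makes this refinement worthwhile. Summing the three contributions yields Eq.~\ref{eq:mdd_radius}: $\mddsize \le \delta n + 2\sum_{t=1}^{\radius} 2t(t+1) = \frac{4}{3}\radius(\radius+1)(\radius+2) + \delta n$.

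The next step is to replace the cubic term by $2\radius^3$. After dividing by $\radius$ and clearing denominators, the inequality $\frac{4}{3}\radius(\radius+1)(\radius+2) \le 2\radius^3$ is equivalent to $2(\radius+1)(\radius+2) \le 3\radius^2$, i.e.\ to $\radius^2 - 6\radius - 4 \ge 0$, which holds for every integer $\radius \ge 7$ (at $\radius = 7$ the left-hand side equals $3$). Hence $\mddsize \le 2\radius^3 + \delta n$. Plugging this into Cor.~\ref{cor:mdd_bound} and using the monotonicity of $2^x$ gives a conflict-tree of size $\bigo\!\left(2^{\agents(2\radius^3 + \delta n)}\right)$, and therefore the same bound on the high-level search complexity, which is the claim. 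Unlike in the footnote to Claim~\ref{claim:first_bound}, there is no subtlety to worry about here, since the exponent is bounded by the exact quantity $2\radius^3 + \delta n$ rather than by an asymptotic expression.

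All these steps are elementary; the one place that needs care --- and where a sceptical reader might push back --- is the layer-by-layer accounting. One must check that the three groups of layers are disjoint and together exhaust all layers of $\mdd_i^{\optcost}$, which is exactly where the hypothesis $\optcost = 2\radius + \delta$ with $\delta \in \mathbb{N}$ is used, and that the reachability argument for the first $\radius$ layers (distance from $s_i$) legitimately transfers to the last $\radius$ layers (distance from $g_i$) through the mirror symmetry --- both of which are inherited from the discussion in Sec.~\ref{sec:mdd_size_worst_case} once we restrict to the $s_i = g_i$ worst case. Minor off-by-one discrepancies in the precise number of layers (e.g.\ whether layer $0$ is counted, or a middle layer when $\optcost$ is odd) only perturb $\mddsize$ by an additive $\bigo(\radius^2)$ term and are thus immaterial to the stated $\bigo$ bound.
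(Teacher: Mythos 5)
Your proof is correct and follows essentially the same route as the paper: bound the first and last $\radius$ layers via Eq.~\ref{eq:mdd_layer}, bound the $\delta$ middle layers by $n$ (justified by Obs.~\ref{obs:radius}), sum to obtain Eq.~\ref{eq:mdd_radius}, verify $\frac{4}{3}\radius(\radius+1)(\radius+2) \leq 2\radius^3$ for $\radius \geq 7$, and apply Cor.~\ref{cor:mdd_bound}. Your explicit check of the inequality (reducing it to $\radius^2 - 6\radius - 4 \geq 0$) and your remarks on the layer accounting only make explicit what the paper leaves implicit.
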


Claim.~\ref{claim:radius} not only allows to express the bound in terms of a new (and arguably, more relevant) parameter (namely, the radius of a graph), it also provides a slightly tighter bound on the overall complexity for full grid graphs, as long as the graph's radius is not too small. The new bound is tighter than the original bound for cases where:~${\agents \cdot (2\radius^3 + \delta n) < \agents n \cdot (2\radius +\delta) \Longrightarrow \radius < \sqrt{n}}$ (which, indeed holds for complete grids). The hidden constant in the Big-O notation in both the new and the original bounds is small and does not affect the asymptotic comparison between them.

\section{Complexity Analysis for CBS using a Recurrence Relation} \label{sec:rec-analysis}

We introduce a novel recurrence relation which bounds the high-level search complexity. More precisely, it bounds the maximal number of \ct nodes that might be generated during the high-level search. We provide an upper-bound on this recurrence relation that allows to improve the original bound on the run-time of \cbs for many cases.

Our improved bound incorporates the fact that recent \cbs variants use \emph{positive constraints} (Sec.~\ref{sec:background_cbs}). This is in contrast to the original analysis that only considers negative constraints.
However, the method in which the recursion is defined is not tied to positive constraints. We believe that tighter bounds may be obtained in the future by defining a similar recursion for alternative implementations of \cbs (such as \cbs with symmetry-breaking~\cite{Li2019symmetry}).

\subsection{Recurrence Relation which Bounds CBS's  Worst-Case Complexity}
Given a MAPF instance with~$\agents$ agents on a graph of size~$n$ where the optimal cost of a solution is $\optcost$, our goal is to bound the maximal number of \ct nodes that might be generated by \cbs after applying a given number of positive and negative constraints. \cbs will terminate if:
\begin{enumerate} 
    \item All possible negative constraints were applied (this assumption is similar to the one used for the original bound).
    \item The algorithm applied $\optcost$ positive constraints on each of the $\agents$ agents.
\end{enumerate}

Note that any (positive or negative) constraint applied to a \ct node cannot be applied to any of its children in the \ct. In addition, if agents $a_i$ and $a_j$ were found to be in a conflict at vertex $v$ at time-step $t$, applying a positive constraint on agent $a_i$ implies that the negative constraints $\negconst{a_i}{v}{t}$ and~$\negconst{a_j}{v}{t}$ cannot appear in the sub-tree of the \ct node.

From the above we get the following recurrence relation:
\begin{lemma} \label{lemma:recursion}
    Let $\recneg$ and $\recpos$ denote the maximal number of negative and positive constraints that \cbs may apply before it is bound to terminate, respectively. Then, the high-level complexity of \cbs is bounded by:
    \begin{equation} \label{eq:recurrence}
        T(\recneg,\recpos) \leq
        \begin{cases}
            1, & \recneg=0 \text{ or } \recpos=0 
            \\ 
            3, & \recneg=1 \text{ and } \recpos > 0 
            \\
            T(\recneg-1,\recpos) + \\\,\,\, T(\recneg-2,\recpos-1) + 1, & \text{ else}.
        \end{cases}
    \end{equation}
\end{lemma}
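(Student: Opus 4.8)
The plan is to derive the recurrence by a careful case analysis on the high-level search tree ($\ct$) of \cbs, tracking two quantities: $\recneg$, an upper bound on the number of negative constraints the algorithm can still apply before it must terminate, and $\recpos$, an upper bound on the number of positive constraints it can still apply. The number of $\ct$ nodes generated from a node with ``budget'' $(\recneg,\recpos)$ is at most $1$ (for the node itself) plus the number of nodes in its two subtrees, and the heart of the argument is to show that splitting a node consumes the budget in exactly the way encoded by the three cases of Eq.~\ref{eq:recurrence}.

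First I would handle the base cases. If $\recneg = 0$, then no further negative constraint can be applied; since every split of a $\ct$ node adds a negative constraint to at least one child (in the disjoint-splitting variant, one child gets a positive constraint and the other a negative one, and in the basic variant both children get negative constraints), the algorithm cannot split further and the subtree has a single node, giving $T(0,\recpos) = 1$; similarly when $\recpos = 0$ no positive split is possible and one shows the same bound (here I would lean on the observation that, once the positive budget is exhausted on every agent at every timestep, the solution is fully determined, so no conflict remains). For the case $\recneg = 1$ and $\recpos > 0$: the root may split once, producing two children; the child receiving the positive constraint could in principle split again, but the child receiving the last negative constraint has $\recneg = 0$ and is a leaf — a short argument bounds the whole subtree by $3$ nodes. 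I would state this cleanly, since it is the case most likely to hide an off-by-one.

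For the main recursive case I would argue as follows. When \cbs splits a $\ct$ node where agents $a_i, a_j$ conflict at vertex $v$ at time $t$, it creates two children. In the disjoint-splitting convention one child imposes the positive constraint $\posconst{a_i}{v}{t}$ and the other imposes the negative constraint $\negconst{a_i}{v}{t}$. On the negative-constraint child, one negative constraint has been spent, so its budget is $(\recneg-1,\recpos)$, contributing at most $T(\recneg-1,\recpos)$. On the positive-constraint child, one positive constraint is spent; moreover, as noted in the paragraph preceding the lemma, forcing $a_i$ to be at $v$ at time $t$ makes both negative constraints $\negconst{a_i}{v}{t}$ and $\negconst{a_j}{v}{t}$ permanently unavailable in that subtree — so two units of the negative budget are also lost, giving a budget of $(\recneg-2,\recpos-1)$ and contributing at most $T(\recneg-2,\recpos-1)$. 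Adding the $+1$ for the node itself yields the claimed inequality. Monotonicity of $T$ in each argument (which is immediate from the recurrence by induction) lets one pass from ``at least this much budget is consumed'' to the stated upper bound.

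The main obstacle I anticipate is justifying the ``two negative constraints become unavailable'' step rigorously in the presence of wait moves and of the possibility that $a_i = a_j$ is not at issue but that the same $(v,t)$ pair could be revisited — i.e., making precise that a constraint ``used up'' in an ancestor truly cannot recur in a descendant, and that the positive constraint genuinely blocks \emph{two} distinct negative constraints rather than one. This requires appealing to the standard invariant that \cbs never applies the same constraint twice on a root-to-node path, together with the semantics of positive constraints. I would also need to double-check that in the basic (non-disjoint) splitting variant the recurrence still holds — there the $T(\recneg-2,\recpos-1)$ term is not tight, but since we only need an upper bound and $T$ is monotone, bounding the basic-variant tree by the disjoint-variant recurrence is legitimate; I would remark on this to keep the lemma applicable to both implementations.
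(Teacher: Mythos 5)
Your proposal is correct and follows essentially the same route as the paper: the paper justifies the recurrence with exactly this case analysis---a leaf when either budget is exhausted, three nodes when $\recneg=1$, and otherwise one child consuming a single negative constraint (budget $(\recneg-1,\recpos)$) while the positive-constraint child consumes one positive constraint and renders two negative constraints unavailable (budget $(\recneg-2,\recpos-1)$), plus one for the node itself. Your write-up is, if anything, more careful than the paper's few-sentence justification, particularly on the $\recneg=1$ case.

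One caveat on your closing remark: the claim that the recurrence also bounds the basic (non-disjoint) variant is not right. There both children receive negative constraints, so the subtree is only bounded by $2\,T(\recneg-1,\recpos)+1$, and since $T(\recneg-1,\recpos)\geq T(\recneg-2,\recpos-1)$ by monotonicity, that quantity exceeds the right-hand side of Eq.~\ref{eq:recurrence}. The lemma genuinely relies on disjoint splitting with positive constraints, as the paper states explicitly; this aside does not affect the validity of your proof of the lemma as stated.
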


For $\recneg=0 \text{ or } \recpos=0$ we get that one of the aforementioned conditions for termination holds, therefore, this respected node is a leaf node.
For $\recneg = 1$ there is still a single negative constraint left to apply, so the node can be split only one more time, creating two additional leaves (and the node itself is also counted). 
For any other inner-node, the algorithm would split it according to a conflict by applying a negative constraint on one branch, and a positive constraint for the other branch.
Note that when applying a negative constraint (i.e., reducing $\recneg$ by one) it does not imply that a positive constraints has been applied (therefore, in the first component of the recurrence step $\recpos$ does not change).

We present two techniques for upper-bounding the recursion presented in Eq.~\ref{eq:recurrence}, which in turns provide an upper-bound for \cbs's complexity.

\subsection{Induction-Based Bound}
\begin{claim} \label{claim:recurrence_bound}
    For any $(\recneg$, $\recpos)$ s.t. $\recneg \geq 1$ and $\recpos \geq 1$ it holds that:
    \begin{equation} \label{eq:recurrence_bound}
        T(\recneg,\recpos) \leq 3 \cdot \recneg^\recpos.
    \end{equation}
    Which implies that $T(\recneg, \recpos) = \bigo(\recneg^\recpos)$.
\end{claim}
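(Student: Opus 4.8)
The plan is to prove the bound $T(\recneg,\recpos) \leq 3\recneg^\recpos$ by induction on $\recneg+\recpos$ (or a suitable lexicographic order), using the recurrence in Eq.~\ref{eq:recurrence}. First I would dispose of the base cases: when $\recneg = 1$ and $\recpos \geq 1$, the recurrence gives $T(1,\recpos) = 3 = 3 \cdot 1^\recpos = 3\recneg^\recpos$, so equality holds; similarly any case reaching $\recneg = 0$ or $\recpos = 0$ is handled by the first line of the recurrence and is dominated. The real content is the inductive step for $\recneg \geq 2$ and $\recpos \geq 1$.

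For the inductive step I would apply the recurrence $T(\recneg,\recpos) \leq T(\recneg-1,\recpos) + T(\recneg-2,\recpos-1) + 1$ and then split on whether the arguments of the two recursive terms still satisfy the hypotheses $\recneg' \geq 1, \recpos' \geq 1$. In the generic case ($\recneg \geq 3$, $\recpos \geq 2$), the induction hypothesis gives
\begin{equation*}
T(\recneg,\recpos) \leq 3(\recneg-1)^\recpos + 3(\recneg-2)^{\recpos-1} + 1,
\end{equation*}
and I would show this is at most $3\recneg^\recpos$. The cleanest route is to note $3(\recneg-2)^{\recpos-1} + 1 \leq 3(\recneg-1)^{\recpos-1}$ (valid since $\recneg \geq 3$ makes $(\recneg-1)^{\recpos-1} - (\recneg-2)^{\recpos-1} \geq 1$), reducing the goal to $3(\recneg-1)^\recpos + 3(\recneg-1)^{\recpos-1} \leq 3\recneg^\recpos$, i.e. $(\recneg-1)^{\recpos-1}\bigl((\recneg-1)+1\bigr) = \recneg(\recneg-1)^{\recpos-1} \leq \recneg^\recpos$, which is immediate since $(\recneg-1)^{\recpos-1} \leq \recneg^{\recpos-1}$. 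The remaining boundary sub-cases ($\recpos = 1$, so the second recursive term has $\recpos - 1 = 0$ and equals $1$; or $\recneg = 2$, so the second term has $\recneg - 2 = 0$ and equals $1$) must be checked separately: for $\recpos=1$ we need $T(\recneg,1) \leq 3(\recneg-1) + 1 + 1 \leq 3\recneg$, and for $\recneg = 2$ we need $T(2,\recpos) \leq 3 \cdot 1^\recpos + 1 + 1 = 5 \leq 3 \cdot 2^\recpos$, both of which hold comfortably. The final asymptotic claim $T(\recneg,\recpos) = \bigo(\recneg^\recpos)$ follows since $3$ is a constant.

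I expect the main obstacle to be bookkeeping rather than mathematical depth: the recurrence has three pieces with different validity ranges, so one has to be careful that the induction hypothesis actually applies to both recursive calls (in particular that $\recneg - 2 \geq 1$ and $\recpos - 1 \geq 1$), and to patch the cases where it does not by falling back on the first line of the recurrence giving the value $1$. A secondary subtlety is choosing the constant $3$ so that it simultaneously makes the base case tight and survives the ``$+1$'' additive slack in the inductive step; the argument above shows $3$ works, and in fact the slack is comfortable enough that the bound is not tight for large $\recneg,\recpos$.
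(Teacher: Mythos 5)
Your proof is correct and follows essentially the same route as the paper's: induction on the pairs $(\recneg,\recpos)$, the same base cases, and the same key manipulation $3(\recneg-1)^{\recpos}+3(\recneg-1)^{\recpos-1}=3\recneg(\recneg-1)^{\recpos-1}\leq 3\recneg^{\recpos}$ after absorbing the $+1$ into $3(\recneg-2)^{\recpos-1}+1\leq 3(\recneg-1)^{\recpos-1}$. Your explicit treatment of the boundary sub-cases $\recneg=2$ and $\recpos=1$ (where one recursive call falls outside the induction hypothesis and must be replaced by the literal value $1$ from the first line of the recurrence) is in fact slightly more careful than the paper's sketch, which glosses over the $\recneg=2$ case.
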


\begin{proof} [Proof sketch]
    The proof is by induction over pairs $(\recneg,\recpos)$, assuming an order where $(\recneg_1,\recpos_1) \geq (\recneg_2,\recpos_2)$ if ${\recneg_1 \geq \recneg_2}$ and~${\recpos_1 \geq \recpos_2}$ (there exists such an order on pairs where~${\recneg,\recpos \in \mathbb{N}}$).
    
    \noindent\underline{Base:}
    \begin{equation*}
    \begin{split}
            T(1,\recpos) &\leq 3 \leq 3\cdot 1^{\recpos}. \\
            T(\recneg,1) &\leq T(\recneg - 1,1) + T(\recneg - 2,0) + 1 =
            T(\recneg - 1,1) + 2 \\&
            \leq \dots \leq T(1,1) + 2 (\recneg - 1) = 2 \recneg + 1 \leq 3\cdot\recneg^1.
    \end{split}
    \end{equation*}
    
    \noindent\underline{Step:} We assume that the claim holds for all pairs smaller than $(\recneg,\recpos)$ and prove for $(\recneg,\recpos)$:
    \begin{equation*}
    \begin{split}
            T(\recneg,\recpos) &\leq T(\recneg-1,\recpos) + T(\recneg-2,\recpos-1) + 1 \\& 
            \mathrel{\mathop{\leq}\limits_{\text{i.h.}}}
            3(\recneg-1)^\recpos + 3(\recneg-2)^{\recpos-1} + 1 \\&
            \mathrel{\mathop{\leq}\limits_{*}}
            3(\recneg-1)^{\recpos} + 3(\recneg-1)^{\recpos - 1} \\ &= 
            3[(\recneg-1)^{\recpos-1} \cdot (\recneg-1) + (\recneg-1)^{\recpos - 1}]\\ &= 
            3(\recneg-1)^{\recpos-1} \cdot (\recneg-1+1) \\ & =
            3\recneg \cdot (\recneg-1)^{\recpos-1} \leq 3\recneg \cdot \recneg^{\recpos-1} = 3\recneg^\recpos.
    \end{split}
    \end{equation*}
    Where~$*$ holds because $(\recneg-1)^{\recpos-1} \geq 1$ for $\recneg > 1$ and~${\recpos > 1}$.
\end{proof}

Recall that negative and positive constraints are bounded by~${\recneg = \agents \mddsize}$ and~${\recpos = \agents \optcost}$, respectively (where $\mddsize$ is the size of an $\mdd$ graph of a single agent). Placing those values in Eq.~\ref{eq:recurrence_bound} gives the following result:
\begin{equation} \label{eq:upper_bound}
    T(\agents \mddsize, \agents \optcost) \leq \bigo{\left((\agents \mddsize)^{\agents \optcost}\right)}.
\end{equation}

From Eq.~\ref{eq:upper_bound} we obtain the following lemma:
\begin{lemma} \label{lemma:cbs_bound_from_recurrence}
    The time-complexity of the high-level search of \cbs is bounded by~${\bigo{\left((\agents \mddsize)^{\agents \optcost}\right)}}$.
\end{lemma}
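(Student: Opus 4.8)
The plan is to simply assemble three pieces that have already been put in place: the recurrence of Lemma~\ref{lemma:recursion}, its closed form from Claim~\ref{claim:recurrence_bound}, and the bounds on the number of constraints coming from Observation~\ref{obs:const_mdd_bound}. First I would recall that Lemma~\ref{lemma:recursion} shows that along any execution of \cbs on the instance, the number of generated \ct nodes is at most $T(\recneg,\recpos)$, where $\recneg$ (resp.\ $\recpos$) is the maximal number of negative (resp.\ positive) constraints that can appear on a root-to-leaf branch before \cbs is forced to terminate. As explained in Sec.~\ref{sec:background_cbs}, the high-level search complexity is this \ct size up to the per-node low-level $\astar$ cost, which we have factored out; hence it suffices to upper-bound $T(\recneg,\recpos)$.

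Next I would pin down $\recneg$ and $\recpos$ in terms of the instance parameters. By Observation~\ref{obs:const_mdd_bound} (equivalently Corollary~\ref{cor:mdd_bound}), every negative constraint that \cbs can ever impose on a fixed agent $a_i$ corresponds to a node of $\mdd_i^{\optcost}$, so there are at most $\mddsize$ such constraints per agent and at most $\agents\mddsize$ in total; and a positive constraint fixes an agent's location at one of the $\optcost$ time-steps, so there are at most $\optcost$ per agent and at most $\agents\optcost$ in total. Thus $\recneg \le \agents\mddsize$ and $\recpos \le \agents\optcost$. To conclude I would observe that $T(\recneg,\recpos)$ is monotone non-decreasing in each argument: this follows by a short induction on $(\recneg,\recpos)$ directly from the three branches of Eq.~\ref{eq:recurrence}, each of which is a sum of strictly smaller instances of $T$ plus a non-negative constant. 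Monotonicity legitimizes replacing $(\recneg,\recpos)$ by $(\agents\mddsize,\agents\optcost)$, and then Claim~\ref{claim:recurrence_bound} gives $T(\agents\mddsize,\agents\optcost) \le 3(\agents\mddsize)^{\agents\optcost}$, which is exactly Eq.~\ref{eq:upper_bound}. Since $3=\bigo(1)$, the high-level search complexity is $\bigo\!\left((\agents\mddsize)^{\agents\optcost}\right)$, as claimed.

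The only genuinely non-trivial points are bookkeeping ones: verifying that $\recneg$ and $\recpos$ really are worst-case bounds valid along \emph{every} branch of the \ct (so the recurrence applies with these values at the root), and checking the monotonicity of $T$ needed to substitute the upper bounds into it. Everything beyond that is a direct appeal to the already-established Claim~\ref{claim:recurrence_bound}, so I expect no further obstacles.
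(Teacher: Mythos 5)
Your proposal is correct and follows essentially the same route as the paper: bound $\recneg$ by $\agents\mddsize$ (via Observation~\ref{obs:const_mdd_bound}) and $\recpos$ by $\agents\optcost$, then substitute into Claim~\ref{claim:recurrence_bound} to get $T(\agents\mddsize,\agents\optcost)\le 3(\agents\mddsize)^{\agents\optcost}$. The only difference is that you explicitly note the monotonicity of $T$ to justify the substitution, a small bookkeeping step the paper leaves implicit by defining $\recneg,\recpos$ directly as the worst-case counts.
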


By taking~${\mddsize = n \optcost}$ (i.e., the bound on an \mdd size considered in the original analysis), we get an
upper-bound of~${\bigo{\left((\agents n \optcost)^{\agents \optcost}\right)}}$ in contrast to the original bound of~${\bigo(2^{n \agents \optcost})}$. Here again, the hidden constant in the Big-O notation in both bounds is small, thus, we omit it in the upcoming comparison between them (see end of Sec.~\ref{sec:bound-with-gf}).

\begin{table}[t]
\centering
\resizebox{1.0\columnwidth}{!}{
\begin{tabular}{|c | c c c | c c c |c|}
    \hline
    \rule{0pt}{3ex} Benchmark Category & $n$ & $\agents$ & $\optcost$ & ORG & REC+IND & REC+GF & $\frac{\text{ORG}}{\text{REC+GF}}$  \\
    \hline\hline
     \rule{0pt}{2.5ex} Warehouse & 9,776 & 8 & 120 & 
     $2^{10^{7}}$ & $2^{10^{5}}$ & $2^{10^{5}}$ & $\mathbf{2^{10^{7}}}$ \\
    \hline
     \rule{0pt}{2.5ex} Warehouse & 9,776 & 64 & 140 & 
     $2^{10^{8}}$ & $2^{10^{6}}$ & $2^{10^{5}}$ & $\mathbf{2^{10^{8}}}$ \\
     \hline
     \rule{0pt}{2.5ex} Warehouse & 38,756 & 128 & 250 & $2^{10^{10}}$ & $2^{10^{7}}$ & $2^{10^{5}}$ & $\mathbf{2^{10^{10}}}$ \\
     \hline
     \rule{0pt}{2.5ex} Warehouse & 38,756 & 256 & 250 & 
     $2^{10^{10}}$ & $2^{10^{7}}$ & $2^{10^{5}}$ & $\mathbf{2^{10^{10}}}$ \\
     \hline
     Room & 206,642 & 8 & 400 & 
     $2^{10^{9}}$ & $2^{10^{6}}$ & $2^{10^{6}}$ & $\mathbf{2^{10^{9}}}$\\
     \hline
     \rule{0pt}{2.5ex} Room & 206,642 & 8 & 500 & 
     $2^{10^{9}}$ & $2^{10^{6}}$ & $2^{10^{6}}$ & $\mathbf{2^{10^{9}}}$ \\
     \hline
     \rule{0pt}{2.5ex} Empty & 2,304 & 64 & 70 & 
     $2^{10^{8}}$ & $2^{10^{6}}$ & $2^{10^{4}}$ & $\mathbf{2^{10^{8}}}$ \\
     \hline
     \rule{0pt}{2.5ex} Empty & 2,304 & 128 & 80 & 
     $2^{10^{8}}$ & $2^{10^{7}}$ & $2^{10^{4}}$ & $\mathbf{2^{10^{8}}}$ \\
     \hline
     \rule{0pt}{2.5ex} Random & 3,687 & 64 & 100 & 
     $2^{10^{8}}$ & $2^{10^{6}}$ & $2^{10^{4}}$ & $\mathbf{2^{10^{8}}}$ \\
     \hline
     \rule{0pt}{2.0ex} Random & 3,687 & 128 & 100 & 
     $2^{10^{8}}$ & $2^{10^{7}}$ & $2^{10^{4}}$ & $\mathbf{2^{10^{8}}}$ \\
     \hline
\end{tabular}
}
\caption{A comparison between the different upper-bounds obtained using the original analysis (ORG), Lemma.~\ref{lemma:cbs_bound_from_recurrence} (REC+IND) and Prop.~\ref{prop:cbs-recursion-est-linear} (REC+GF), on standard benchmarks \cite{Sturtevant2012, Stern2019}.
The last column presents a lower bound on the ratio between our improved bound and the original bound, which reflects the improvement.
All bounds are calculated considering that $\mddsize = n \optcost$. Note that all actual bounds include a small constant multiplication factor, but the comparison in this table accounts only for the asymptotic factors.
}
\label{table:benchmarks}
\end{table}

\subsection{Generating Functions-Based Bound} \label{sec:bound-with-gf}
In our second approach, we present an alternative approach to bounding the recursion (Eq.~\ref{eq:recurrence}) using \emph{generating functions}. This, in turn, will allow us to obtain a tighter bound on \cbs's complexity.
Due to lack of space we only outline the analysis and refer the reader to \arxiv{the extended version of this paper~\cite{ExtendedVersion}}{Appendix.~\ref{appendix:gen-func-method}}. 

We start by introducing the generating function for~$\functwo{T}{\recneg}{\recpos}$. We then continue to follow the steps outlined by \citet{PemantleW08} to obtain a bound on~$\functwo{T}{\recneg}{\recpos}$.
We stress that the suggested analysis method is not applicable for formally proving the bound's correctness, but we can use it to deduce an asymptotic upper-bound which we then support empirically for a variety of different values.

The method by \citet{PemantleW08}, as described in Sec.~\ref{sec:gen-func-analysis}, calls for finding the contribution for the bound obtained by each critical point given from the solutions for Eq.~\ref{eq:critical-points-system}.
We first find the the contribution for each critical point. We then apply the following key observation which allows us to deduce a suggested upper-bound for \cbs. The observation follows from our analysis of the size of an \mdd (Sec.~\ref{sec:mdd_size_worst_case}):
\begin{observation} \label{obs:linear-dependency}
    For any MAPF instance, there is a linear dependency between $\recneg$, the maximal number of negative constraints and $\recpos$, the maximal number of positive constraints that \cbs can apply.
    Specifically,~${\recneg = n \cdot \left( \agents \optcost \right) = n \cdot \recpos}$.
\end{observation}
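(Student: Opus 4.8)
The plan is to obtain $\recpos$ and $\recneg$ separately from the two termination conditions of \cbs stated just before Lemma~\ref{lemma:recursion}, and then observe that the resulting (worst‑case) bounds differ only by the instance‑dependent, recursion‑independent factor $n$. First I would fix $\recpos$: by termination condition~(2), \cbs must halt once it has applied $\optcost$ positive constraints on each of the $\agents$ agents, and each agent has exactly $\optcost$ time‑steps at which a positive constraint can be imposed, so $\recpos = \agents\optcost$, independently of the graph. Next I would bound $\recneg$: by Obs.~\ref{obs:const_mdd_bound} and Cor.~\ref{cor:mdd_bound} the number of negative constraints \cbs may apply on a single agent is at most $\mddsize$, hence $\recneg \le \agents\mddsize$; since $\mdd_i^{\optcost}$ has exactly $\optcost$ layers and every layer holds at most $n$ vertices, $\mddsize \le n\optcost$, giving $\recneg \le \agents n\optcost = n\cdot\recpos$. (Equivalently, a negative vertex‑constraint is a triple of the form (agent, vertex, time‑step), of which there are at most $\agents\cdot n\cdot\optcost$.) Taking the loose bound $\mddsize = n\optcost$ that matches the original analysis yields exactly $\recneg = n\cdot(\agents\optcost) = n\cdot\recpos$.

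The substantive point — the reason the observation is attributed to the \mdd analysis of Sec.~\ref{sec:mdd_size_worst_case} rather than being a throwaway counting remark — concerns what ``linear dependency'' should mean for the generating‑function argument that uses it. The true ratio is $\recneg/\recpos = \mddsize/\optcost$, which on structured graphs (e.g.\ the grids of Sec.~\ref{sec:mdd_size_worst_case}, where $\mddsize = \bigo(\optcost^3)$, or graphs of small radius) can be much smaller than $n$; the content of the observation is that, whatever this ratio $c \le n$ is, it is a fixed function of the instance and does not depend on how deep into the \ct we have descended. I would therefore phrase the step as: for every instance there is a constant $c$ (with $c \le n$, and $c = \bigo(\optcost^2)$ on grids) such that $\recneg = c\,\recpos$.

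Finally I would connect this to the use made of it in Sec.~\ref{sec:bound-with-gf}: substituting $r=\recneg$, $s=\recpos$ into $\functwo{T}{r}{s}$ means we evaluate the sequence along the fixed ray $r = c\,s$ in the positive quadrant, and it is precisely this one‑parameter restriction that selects a single critical point among the solutions of Eq.~\ref{eq:critical-points-system} — the one lying in the direction determined by $c$ — from whose contribution the asymptotic upper bound is read off. I expect the only real work here is stating the ray restriction cleanly and recording which critical point it isolates; the counting behind $\recpos = \agents\optcost$ and $\recneg \le \agents n\optcost$ is immediate from the termination conditions and the $\optcost$‑layer structure of an \mdd.
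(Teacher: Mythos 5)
Your proposal is correct and matches the paper's (largely implicit) justification: $\recpos=\agents\optcost$ comes from the termination condition of one positive constraint per (agent, time-step) pair, and $\recneg=\agents n\optcost$ comes from the original analysis's count of one negative constraint per (agent, vertex, time-step) triple, giving $\recneg=n\cdot\recpos$. Your added remark that the true ratio is $\mddsize/\optcost$ and can be replaced by an instance-dependent constant $c\le n$ is exactly how the paper itself later exploits the observation (e.g., the quadratic dependency $\recneg=\agents\optcost^3$ on grids), so it is a faithful, slightly more explicit rendering of the same argument.
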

By applying Obs.~\ref{obs:linear-dependency} we get that one of the contribution factors obtained from the analysis gives a tight upper-bound on~$\functwo{T}{\recneg}{\recpos}$.

As explained, We start with presenting the generating function for this recursion, which is:
\begin{equation}
    \functwo{F}{x}{y}=\frac{1-x+2xy-x^2y}{(1-x)(1-y)(1-x-x^{2}y)}.
\end{equation}
We denote $F = G/H$ where
\begin{equation*}
    \begin{split}
        &G(x,y) = 1-x+2xy-x^2y, \\ 
        &H(x,y) = (1-x)(1-y)(1-x-x^{2}y),
    \end{split}
\end{equation*}
%
and solve Eq.~\ref{eq:critical-points-system} to find the critical points.
In this setting there are three such points:
\begin{equation*} \label{eq:critical-points}
    \begin{split}
        &q_1 \coloneqq \point{x_1}{y_1} = \left(\frac{-1+\sqrt{5}}{2}, 1\right), \\
        &q_2 \coloneqq \point{x_2}{y_2} = \left(1, 1\right), \\
        &q_3 \coloneqq \point{x_3}{y_3} = \left(\frac{\recneg-2\recpos}{\recneg-\recpos}, \frac{\recpos(\recneg-\recpos)}{(\recneg-2\recpos)^2}\right).
    \end{split}
\end{equation*}
We denote the matching contribution factor of each point~$q_i$ by $\functwo{T_i}{\recneg}{\recpos}$. 
%
Computing the exact contribution for each point is done according to \citet{PemantleW08}. 
This involves  basic (yet daunting) algebraic manipulations and is summarized in Lemma.~\ref{lemma:cbs-recursion-est-general}
(proof omitted).
\begin{lemma} \label{lemma:cbs-recursion-est-general}
    \begin{equation*}
        \begin{split}
            &\functwo{T_1}{\recneg}{\recpos} = 1, \\&
            \functwo{T_2}{\recneg}{\recpos} = \bigo(1) \cdot \left( \frac{1+\sqrt{5}}{2} \right)^{r}, \\&
            \functwo{T_3}{\recneg}{\recpos} = \frac{(\recneg-\recpos)^{\recneg-\recpos}}{(\recneg-2\recpos)^{\recneg-2\recpos}\cdot \recpos^{\recpos}} 
        \cdot
        \frac{2\recpos}{\recneg-2\recpos} 
        \cdot
        \sqrt{\frac{\alpha}{2\pi}},
        \end{split}
    \end{equation*}
    
    where $\alpha = \bigo{\left(\frac{\recneg^2}{s}\right)}$.
\end{lemma}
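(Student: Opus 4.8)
The plan is to run the analytic-combinatorics-in-several-variables program of \citet{PemantleW08} on the bivariate rational function $F=G/H$ with $H(x,y)=(1-x)(1-y)(1-x-x^{2}y)$, and to obtain the three summands of the lemma as the contributions of the three critical points $q_1,q_2,q_3$. Since the asymptotic formula one applies at a critical point depends on the local geometry of the singular variety $\mathcal V=\{H=0\}$ there, I would first record that geometry. Each of the three factors of $H$ is irreducible and cuts out a smooth curve, so $\mathcal V$ is the union of three smooth sheets. A direct substitution gives $1-x_3-x_3^{2}y_3=0$ while $1-x_3\neq 0$ and $1-y_3\neq 0$, so $q_3$ is a \emph{smooth} point of $\mathcal V$ lying on the single sheet $\{1-x-x^{2}y=0\}$; the same kind of check shows $q_1=\bigl(\tfrac{\sqrt5-1}{2},1\bigr)$ lies on the transverse intersection of the sheets $\{y=1\}$ and $\{1-x-x^{2}y=0\}$ (the identity $x_1^{2}+x_1=1$ satisfied by $x_1=\tfrac{\sqrt5-1}{2}$, i.e.\ $1-x_1-x_1^{2}=0$, is precisely the equation of the third sheet restricted to $y=1$), and $q_2=(1,1)$ lies on the transverse intersection of $\{x=1\}$ and $\{y=1\}$. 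Hence $q_1$ and $q_2$ are simple, complete-intersection double points while $q_3$ is a smooth point.

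For the smooth point $q_3$ I would invoke the smooth-point asymptotic formula of \citet{PemantleW08}. Its exponential factor is $x_3^{-\recneg}y_3^{-\recpos}$; substituting $x_3=\tfrac{\recneg-2\recpos}{\recneg-\recpos}$ and $y_3=\tfrac{\recpos(\recneg-\recpos)}{(\recneg-2\recpos)^{2}}$ and simplifying with $2\recpos-\recneg=-(\recneg-2\recpos)$ collapses it to the closed form $\tfrac{(\recneg-\recpos)^{\recneg-\recpos}}{(\recneg-2\recpos)^{\recneg-2\recpos}\,\recpos^{\recpos}}$. The polynomial-order amplitude needs $G$ at $q_3$: writing $G=(1-x)+xy(2-x)$ and using $1-x_3=\tfrac{\recpos}{\recneg-\recpos}$, $x_3y_3=\tfrac{\recpos}{\recneg-2\recpos}$ and $2-x_3=\tfrac{\recneg}{\recneg-\recpos}$ gives $G(x_3,y_3)=\tfrac{2\recpos}{\recneg-2\recpos}$. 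The remaining ingredients --- the values of $\bigl(y\,\partderv{H}{y}\bigr)^{-1}$ and of the non-vanishing factors $1-x_3$, $1-y_3$ at $q_3$, together with the Gaussian/curvature correction $\bigl(2\pi\recpos\,\mathcal H(q_3)\bigr)^{-1/2}$ from the second-order expansion of the logarithm of the integrand along the critical contour --- combine into a single rational function of $(\recneg,\recpos)$; collecting them as $\sqrt{\alpha/(2\pi)}$ and verifying $\alpha=\bigo(\recneg^{2}/\recpos)$ produces $\functwo{T_3}{\recneg}{\recpos}$ in the stated form.

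For the two double points I would use the multiple-point (complete-intersection) formula of \citet{PemantleW08}. At a transverse intersection of two smooth sheets in two variables the local picture is a product of two simple poles, so the contribution is $x_i^{-\recneg}y_i^{-\recpos}$ times a \emph{bounded} factor --- a fixed rational function of $(\recneg,\recpos)$ with no polynomial growth, because the direction equation of Eq.~\ref{eq:critical-points-system} is automatically satisfied at such a point. At $(1,1)$ the exponential factor is $1^{-\recneg}1^{-\recpos}=1$ and, factoring out the non-vanishing third sheet ($1-x-x^{2}y\to -1$ there), the bounded factor equals $G(1,1)=1$ up to sign, giving the constant contribution; at $\bigl(\tfrac{\sqrt5-1}{2},1\bigr)$ the exponential factor is $x_1^{-\recneg}y_1^{-\recpos}=\bigl(\tfrac{1+\sqrt5}{2}\bigr)^{\recneg}$ and the amplitude $\tfrac{G(q_1)}{1-x_1}$ times the local Jacobian factor is $\bigo(1)$, giving $\bigo(1)\cdot\bigl(\tfrac{1+\sqrt5}{2}\bigr)^{\recneg}$. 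These are the contributions $\functwo{T_1}{\recneg}{\recpos}$ and $\functwo{T_2}{\recneg}{\recpos}$ of the lemma.

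The main obstacle is the curvature computation at $q_3$: one must parametrize the sheet $\{1-x-x^{2}y=0\}$ near $q_3$, deform the Cauchy contour through $q_3$, expand the logarithm of the integrand to second order, and identify the scalar Hessian $\mathcal H(q_3)$, whose reciprocal square root --- pooled with the amplitude factors above --- must be shown to aggregate to $\sqrt{\alpha/(2\pi)}$ with $\alpha$ of order $\recneg^{2}/\recpos$; this is a correct but lengthy symbolic calculation. A secondary, more conceptual point is to identify which critical point is \emph{minimal}, hence genuinely contributing, in a given direction, and to check that $G$ does not vanish there; by Obs.~\ref{obs:linear-dependency} the relevant regime is $\recneg=n\recpos$ with $n$ large, where $q_3$ is the contributing point, so $\functwo{T_3}{\recneg}{\recpos}$ is the operative bound. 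As the paper notes, the resulting estimate is then validated numerically rather than proved to be tight.
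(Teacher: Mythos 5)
Your plan is essentially the paper's own argument (carried out in its appendix): classify $q_1,q_2$ as transverse double points and $q_3$ as a smooth point of $\{H=0\}$, then apply the corresponding multiple-point and smooth-point formulas of \citet{PemantleW08}, and your intermediate computations ($x_3^{-\recneg}y_3^{-\recpos}$ collapsing to the stated ratio, $G(q_3)=\tfrac{2\recpos}{\recneg-2\recpos}$, $G(1,1)=1$, $x_1^{-\recneg}=\bigl(\tfrac{1+\sqrt5}{2}\bigr)^{\recneg}$) all check out. Like the paper, you defer the heavy $Q$/Hessian evaluation at $q_3$ that yields $\alpha=\bigo(\recneg^2/\recpos)$, and you correctly flag that the whole derivation is an asymptotic estimate validated empirically rather than a formal proof.
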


\begin{figure*}[t]
    \centering
    \begin{subfigure}{0.69\columnwidth}
        \includegraphics[width=\textwidth]{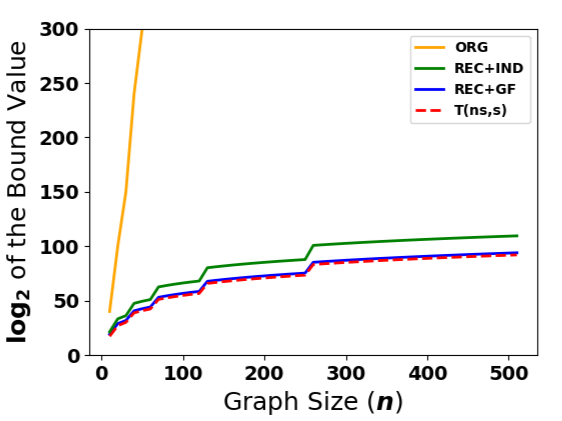}
        \caption{$s = \log_2{n}$}
        \label{fig:bounds-s-log-n}
    \end{subfigure}
    \begin{subfigure}{0.69\columnwidth}
        \includegraphics[width=\textwidth]{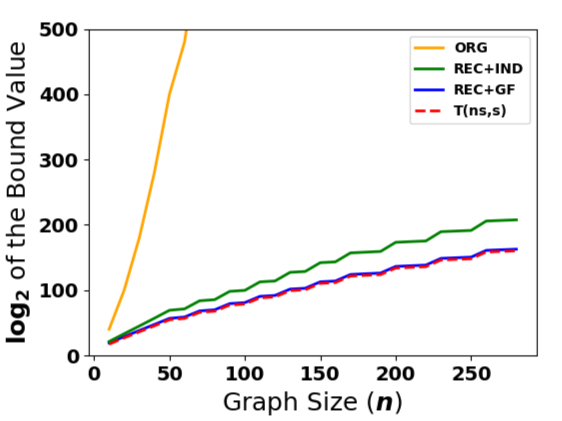}
        \caption{$s = \sqrt{n}$}
        \label{fig:bounds-s-sqrt-n}
    \end{subfigure}
    \begin{subfigure}{0.69\columnwidth}
        \includegraphics[width=\textwidth]{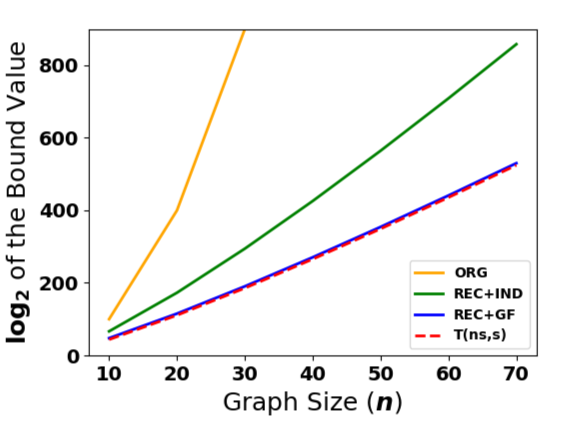}
        \caption{$s = n$}
        \label{fig:bounds-s-n}
    \end{subfigure}
    \caption{The $\log_2$ of the different bounds as a function of the graph size for different ratios between the graph's size ($n$) and the instance properties ($s=\agents \optcost$). 
    The two new bounds (REC+IND, REC+GF) are significantly lower than the original bound. 
    Notice that the approximation obtained from the generating functions analysis (REC+GF) indeed tightly bounds the recurrence~$T(ns,s)$.}
    \label{fig:bounds-comparison}
\end{figure*}

Following \citet{PemantleW08}, we can use Lemma.~\ref{lemma:cbs-recursion-est-general} to estimate the asymptotic growth of Eq.~\ref{eq:recurrence}. Specifically, this growth is likely to be estimated by one of the three terms.
Yet, using it to deduce an upper-bound for \cbs's complexity is not straightforward.

Fortunately, by applying Obs.~\ref{obs:linear-dependency} we can obtain an estimated bound on Eq.~\ref{eq:recurrence} that is tighter than the one obtained using the induction-based analysis~${( \text{Lemma.}}$~\ref{lemma:cbs_bound_from_recurrence}).
We do it by restricting the recurrence to values of $\recneg$ and $\recpos$ that can be attained in our MAPF setting. Specifically, using $\recneg=n\cdot \recpos$ in Lemma.~\ref{lemma:cbs-recursion-est-general} we have that,
\begin{proposition}
\label{prop:cbs-recursion-est-linear}
The high-level search complexity of \cbs for instances with $n \geq 4$ vertices, $\agents$ agents and an optimal solution cost $\optcost$ is bounded by $\bigo{\left((e n)^{\agents \optcost} \right)}$.
\end{proposition}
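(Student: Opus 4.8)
The plan is to take the three critical-point contributions $\functwo{T_1}{\recneg}{\recpos}$, $\functwo{T_2}{\recneg}{\recpos}$, $\functwo{T_3}{\recneg}{\recpos}$ supplied by Lemma~\ref{lemma:cbs-recursion-est-general}, substitute into them the linear relation of Observation~\ref{obs:linear-dependency} (namely $\recneg = n\cdot\recpos$ with $\recpos = \agents\optcost$), and show that under this substitution the dominant contribution collapses to $\bigo\!\left((en)^{\agents\optcost}\right)$. Following \citet{PemantleW08}, the asymptotic growth of $\functwo{T}{\recneg}{\recpos}$ from Eq.~\ref{eq:recurrence} is captured by the largest of the three contributions along the direction $\recneg/\recpos$, so it suffices to bound each $\functwo{T_i}{n\recpos}{\recpos}$ by $(en)^{\recpos}$ up to polynomial factors and then take the maximum.

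The main step is the simplification of $\functwo{T_3}{\recneg}{\recpos}$. Writing $\recneg-\recpos = (n-1)\recpos$ and $\recneg-2\recpos = (n-2)\recpos$, the leading factor becomes
\[
\frac{\bigl((n-1)\recpos\bigr)^{(n-1)\recpos}}{\bigl((n-2)\recpos\bigr)^{(n-2)\recpos}\cdot \recpos^{\recpos}}
= \left(\frac{(n-1)^{\,n-1}}{(n-2)^{\,n-2}}\right)^{\recpos},
\]
because the powers of $\recpos$ cancel telescopically: $(n-1)\recpos = (n-2)\recpos + \recpos$. Now $\frac{(n-1)^{n-1}}{(n-2)^{n-2}} = (n-1)\bigl(1+\tfrac{1}{n-2}\bigr)^{n-2} < e(n-1) < en$, using the standard inequality $(1+\tfrac1m)^m < e$, valid for $m = n-2 \ge 1$, hence for $n \ge 3$. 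The remaining factors of $\functwo{T_3}{\recneg}{\recpos}$ are $\frac{2\recpos}{\recneg-2\recpos} = \frac{2}{n-2}$, which is $\le 1$ precisely when $n \ge 4$ (this is where the hypothesis is used), and $\sqrt{\alpha/2\pi}$ with $\alpha = \bigo(\recneg^2/\recpos) = \bigo(n^2\recpos) = \bigo(n^2\agents\optcost)$, i.e. a factor polynomial in $n,\agents,\optcost$. Hence $\functwo{T_3}{n\recpos}{\recpos} = \bigo(\mathrm{poly}(n,\agents,\optcost))\cdot (en)^{\agents\optcost}$, which — following the paper's convention of quoting only the exponential base, and since this quantity is ultimately multiplied by the single-agent $\astar$ cost — we record as $\bigo\!\left((en)^{\agents\optcost}\right)$.

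It then remains to dispose of the two boundary contributions. Trivially $\functwo{T_1}{\recneg}{\recpos} = 1 \le (en)^{\agents\optcost}$. For $\functwo{T_2}{\recneg}{\recpos} = \bigo(1)\cdot\varphi^{(\cdot)}$ with $\varphi = \tfrac{1+\sqrt5}{2} < 2 < en$ for $n\ge 4$, the contribution along the direction $\recneg = n\recpos$ is likewise dominated by $(en)^{\agents\optcost}$: the point $q_1$ is not the minimal critical point in this direction, and even a crude bound on its contribution stays below $(en)^{\agents\optcost}$. Taking the maximum of the three contributions yields the claimed $\bigo\!\left((en)^{\agents\optcost}\right)$, which is strictly tighter than the REC+IND bound $\bigo\!\left((\agents\mddsize)^{\agents\optcost}\right)$ of Lemma~\ref{lemma:cbs_bound_from_recurrence} (e.g. with $\mddsize = n\optcost$ one has $en \ll \agents n\optcost$ once $\agents\optcost$ is moderately large), consistent with Fig.~\ref{fig:bounds-comparison}.

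I expect the main obstacle to be not the algebra above but the justification that this three-term maximum is a genuine \emph{upper} bound on $\functwo{T}{\recneg}{\recpos}$ rather than merely its asymptotic order: the \citet{PemantleW08} machinery produces asymptotic estimates of the coefficients of a rational generating function, and a rigorous argument must establish that the smooth critical point $q_3$ is the minimal one in the direction $\recneg/\recpos = n$ (so that $q_1,q_2$ do not overtake it) and that the polynomial correction terms are uniformly controlled over the relevant range. As already flagged in the text, a fully first-principles proof is delicate; the honest status is that the computation above yields the stated asymptotic bound, which we then corroborate numerically (Fig.~\ref{fig:bounds-comparison}).
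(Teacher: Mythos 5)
Your overall route is the paper's route: substitute $\recneg=n\recpos$ into the three critical-point contributions of Lemma~\ref{lemma:cbs-recursion-est-general}, telescope the powers of $\recpos$ in $\functwo{T_3}{n\recpos}{\recpos}$ to get $\left(\frac{(n-1)^{n-1}}{(n-2)^{n-2}}\right)^{\recpos}$, bound that base by $en$, absorb the remaining polynomial factors, and set $\recpos=\agents\optcost$. Your explicit justification $\frac{(n-1)^{n-1}}{(n-2)^{n-2}}=(n-1)\left(1+\frac{1}{n-2}\right)^{n-2}<e(n-1)$ is a welcome addition (the paper only asserts the inequality), and your closing caveat about the non-rigorous status of the Pemantle--Wilson step matches the paper's own disclaimer.

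However, your treatment of $\functwo{T_2}{\recneg}{\recpos}$ contains a concrete error, and with it the ``bound each contribution by $(en)^{\recpos}$ and take the maximum'' framing collapses. Along the direction $\recneg=n\recpos$ the second contribution is $\functwo{T_2}{n\recpos}{\recpos}=\bigo(1)\cdot\varphi^{n\recpos}=\bigo(1)\cdot(\varphi^{n})^{\recpos}$ with $\varphi=\frac{1+\sqrt{5}}{2}$, and $\varphi^{n}>en$ for every $n\geq 6$ (e.g.\ $\varphi^{6}\approx 17.9>6e\approx 16.3$, and the gap grows exponentially in $n$). So the claim that ``even a crude bound on its contribution stays below $(en)^{\agents\optcost}$'' is false; taken literally, the maximum of the three terms is $\Theta(\varphi^{n\agents\optcost})$, which is exponentially worse than the stated bound and comparable to the original one. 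The paper does not dominate $\functwo{T_2}{n\recpos}{\recpos}$ by $\functwo{T_3}{n\recpos}{\recpos}$ --- it \emph{discards} it: following \citet{PemantleW08}, only the critical point relevant to the direction $\recneg/\recpos=n$ governs the asymptotics, the two candidate formulas cross at $n_0=\frac{\sqrt{5}+2}{2}\approx 3.618$, and for $n>n_0$ the smooth point $q_3$ is argued (and empirically verified in Sec.~\ref{sec:bounds-comparison}) to be the governing one, so that $\functwo{T}{n\recpos}{\recpos}\sim\functwo{T_3}{n\recpos}{\recpos}$ even though the formula for $\functwo{T_2}{n\recpos}{\recpos}$ evaluates to something larger. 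You do gesture at the correct reason (``$q_1$ is not the minimal critical point in this direction''), but your argument should rest on that selection principle alone; the hypothesis $n\geq 4$ is then naturally read as the first integer exceeding $n_0$, rather than (or in addition to) the point where $\frac{2}{n-2}\leq 1$.
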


We approximate the value of $\functwo{T}{n\recpos}{\recpos}$ according to Lemma.~\ref{lemma:cbs-recursion-est-general} and have that
\begin{equation} \label{eq:cbs-recursion-est-linear}
    \begin{split}
        &\functwo{T_1}{n\recpos}{\recpos} = 1, \\&
        \functwo{T_2}{n\recpos}{\recpos} = \bigo(1) \cdot \left( \frac{1+\sqrt{5}}{2} \right)^{n\recpos}, \\& 
        \functwo{T_3}{n\recpos}{\recpos} = \left(\frac{(n-1)^{n-1}}{(n-2)^{n-2}} \right)^\recpos 
        \cdot 
        \frac{2}{n-2}
        \cdot
        \sqrt{\frac{\beta}{2\pi \recpos}},
    \end{split}
\end{equation}

\noindent
where $\beta = \bigo{\left( n^2 \right)}$.

The contribution to $\functwo{T}{n\recpos}{\recpos}$ from $q_1$ and $q_2$ (given by~${\functwo{T_1}{n\recpos}{\recpos} + \functwo{T_2}{n\recpos}{\recpos}}$)
is identical to the contribution from $q_3$ (given by $\functwo{T_3}{n\recpos}{\recpos}$) at $n_0 = \frac{\sqrt{5} + 2}{2} \approx 3.618033$.
In Sec.~\ref{sec:bounds-comparison} we empirically demonstrate that $\functwo{T_3}{n\recpos}{\recpos}$ indeed constitutes a tight bound for any $n > n_0$.

Therefore, we continue with the simplification of the expression given by $\functwo{T_3}{n\recpos}{\recpos}$. Since $\frac{2}{n-2} \cdot \sqrt{\frac{\beta}{2\pi}} = \bigo(1)$ and~${\left( \frac{(n-1)^{n-1}}{(n-2)^{n-2}} \right) < e n}$, we get the following result:
\begin{equation*}
    \functwo{T}{n\recpos}{\recpos} = \bigo{\left( (e n)^{\recpos} \right)},
\end{equation*}
\noindent
with a small hidden constant factor in the Big-O notation. 
By placing $\recpos = \agents \optcost$, which is the maximal number of positive constraints that needs to be applied by \cbs in the worst case, we get the desired bound.

Prop.~\ref{prop:cbs-recursion-est-linear} improves the original known bound of~${\bigo{\left(2^{n \agents \optcost}\right)}}$ for any set of values~${n, \agents}$ and $\optcost$.
Moreover, it is also tighter than the already-improved bound presented in Lemma.~\ref{lemma:cbs_bound_from_recurrence}. 
Notice that it allows to replace the asymptotic factor of $\agents \optcost$ in the base of the exponent with a constant~($e$), while also still eliminating the exponential dependency in $n$.

New bounds can also be obtained by combining the results from Sec.~\ref{sec:mdd} that bound the size of an \mdd.
For example, using Eq.~\ref{eq:mdd_size} we observe that there is a quadratic dependency between $\recneg = \agents \optcost^3$ and $\recpos = \agents \optcost$ on 4-connected grids.
By simply substituting $n$ with $\optcost^2$ in the bound obtained by Prop.~\ref{prop:cbs-recursion-est-linear} we have that,
\begin{corollary}
    The high-level search complexity of \cbs on 4-connected grids is bounded by $\bigo{\left((e \optcost)^{2 \agents \optcost} \right)}$.
\end{corollary}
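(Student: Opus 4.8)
The plan is to reuse the generating-function bound behind Prop.~\ref{prop:cbs-recursion-est-linear} essentially verbatim, changing only the linear relation between the two constraint counts. The first step is to record the right dependency for grids: by Eq.~\ref{eq:mdd_size}, the size of a single agent's \mdd on a 4-connected grid satisfies $\mddsize = \bigo(\optcost^3)$, so the maximal numbers of negative and positive constraints obey $\recneg = \agents\mddsize = \agents\optcost^3$ and $\recpos = \agents\optcost$, hence $\recneg = \optcost^2\cdot\recpos$. This plays exactly the role that $\recneg = n\cdot\recpos$ (Obs.~\ref{obs:linear-dependency}) plays in the analysis leading to Prop.~\ref{prop:cbs-recursion-est-linear}, under the single substitution $n\mapsto\optcost^2$.

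The second step is to push this substitution through Lemma.~\ref{lemma:cbs-recursion-est-general} and the subsequent simplification. Writing $m := \optcost^2$, the three critical-point contributions become $T_1 = 1$, $T_2 = \bigo(1)\cdot\big(\tfrac{1+\sqrt5}{2}\big)^{m\recpos}$, and $T_3 = \big(\tfrac{(m-1)^{m-1}}{(m-2)^{m-2}}\big)^{\recpos}\cdot\tfrac{2}{m-2}\cdot\sqrt{\tfrac{\beta}{2\pi\recpos}}$ with $\beta = \bigo(m^2)$, valid once $m\ge 4$, i.e.\ $\optcost\ge 2$. As in the proof of Prop.~\ref{prop:cbs-recursion-est-linear}, for $m > n_0 = \tfrac{\sqrt5+2}{2}$ the dominant term is $T_3$ (and $\optcost^2 > n_0$ always holds here), and using $\tfrac{2}{m-2}\sqrt{\tfrac{\beta}{2\pi}} = \bigo(1)$ together with $\tfrac{(m-1)^{m-1}}{(m-2)^{m-2}} = (m-1)\big(1+\tfrac1{m-2}\big)^{m-2} < e\,m$ we obtain $T(\optcost^2\recpos,\recpos) = \bigo\big((e\optcost^2)^{\recpos}\big)$.

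The third step is cosmetic: substitute $\recpos = \agents\optcost$, the worst-case number of positive constraints, to get $\bigo\big((e\optcost^2)^{\agents\optcost}\big)$, and then note $e\optcost^2 \le e^2\optcost^2 = (e\optcost)^2$ for every $\optcost\ge 1$, so this is at most $\bigo\big((e\optcost)^{2\agents\optcost}\big)$, which is the claimed bound. Multiplying by the cost of a single-agent $\astar$ search gives the full complexity statement, exactly as for the earlier bounds.

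The only genuine subtlety — and it is inherited, not new — is that the Pemantle--Wilson machinery only \emph{suggests} that the asymptotics of $T(\recneg,\recpos)$ are captured by the maximal critical-point contribution; it is not a formal proof. I would therefore present this corollary with the same status as Prop.~\ref{prop:cbs-recursion-est-linear}, and, mirroring Sec.~\ref{sec:bounds-comparison}, support it empirically by evaluating the recurrence of Eq.~\ref{eq:recurrence} on the slice $\recneg = \optcost^2\recpos$ over a range of $\optcost$ and checking that $(e\optcost^2)^{\recpos}$ (hence $(e\optcost)^{2\agents\optcost}$) stays above it. I do not expect any algebraic obstruction beyond what already appears in the proof of Prop.~\ref{prop:cbs-recursion-est-linear}.
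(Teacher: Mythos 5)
Your proposal is correct and follows exactly the paper's own route: the paper derives this corollary by observing from Eq.~\ref{eq:mdd_size} that $\recneg = \agents\optcost^3 = \optcost^2\cdot\recpos$ on 4-connected grids and then substituting $n \mapsto \optcost^2$ into Prop.~\ref{prop:cbs-recursion-est-linear}, precisely as you do. Your additional remarks — the explicit relaxation $e\optcost^2 \le (e\optcost)^2$ to reach the stated form, and the caveat that the bound inherits the heuristic status of the Pemantle--Wilson analysis — are accurate elaborations of what the paper leaves implicit.
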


\subsection{Empirical Comparison of Bounds}
\label{sec:bounds-comparison}
To demonstrate the difference between the new bounds and the original one, we evaluated their ratio for commonly-used benchmarks.
The results are summarized in Table~\ref{table:benchmarks}, which presents the worst-case complexity of \cbs's high-level search, for the original bound (ORG), the looser induction-based bound presented in Lemma.~\ref{lemma:cbs_bound_from_recurrence} (REG+IND) and the tighter generating functions-based bound presented in Prop.~\ref{prop:cbs-recursion-est-linear} (REC+GF).
It also presents the ratio between the original bound and the best newly-obtained bound, to demonstrate the improvement.

For all benchmarks, the new bound improves the original one by a factor of at least~${2^{10^{7}}}$.
More precisely, for any instance we examined, we obtain a significantly-tighter bound on the high-level complexity of \cbs.

To support the correctness of the bound obtained in Prop.~\ref{prop:cbs-recursion-est-linear}, we evaluated and compared its value to the recurrence's value for large values of $n$ and $\recpos$. We also use this evaluation to further demonstrate the difference between the new bounds and the original one.

Fig.~\ref{fig:bounds-comparison} presents a comparison (on a logarithmic scale) of the three bounds (ORG, REG+IND and REC+GF) and~$T(\recneg, \recpos)$ for different graph sizes $n$ for the setting $\recneg = n \recpos$ (see Obs.~\ref{obs:linear-dependency}). 
The three different figures reflect different settings of MAPF, where the dependency between the graph's size and the instance properties (number of agents and optimal solution cost) changes. 
Indeed for all different cases the same trend can be observed where the gap between the presented functions demonstrate the magnitude of the improvement obtained using our analysis. 
In addition, the figures serve as an empirical validation of the generating functions-based analysis---the asymptotic bound (REC+GF) tightly approximates the recurrence relation (Eq.~\ref{eq:recurrence}).

\section{Summary and Discussion}
\label{sec:discussion}
\subsection{Summary}
We presented two novel approaches for analyzing the worst-case complexity of \cbs's high-level search. 
In the first approach, by analysing the size of an agent's \mdd graph, we provided two new upper-bounds for \cbs's high-level search of $\bigo(2^{\agents \optcost^3})$ and $\bigo(2^{\agents \cdot (2\radius^3 + \delta n)})$. 
The latter is a bound for a setting where a solution's optimal cost is dependent on the radius $\radius$ of $G$. Our approach allows to seamlessly obtain tighter bounds on CBS's complexity given tighter bounds on~$\mddsize$. Such bounds may be obtained either by (i) better analyzing the general structure of an \mdd or (ii) restricting the analysis for a specific instance of interest.

In the second approach we presented the recurrence relation $T(\recneg, \recpos)$. An upper-bound on $T(\recneg, \recpos)$ constitutes a bound on the size of the \ct of \cbs, therefore bounding  the complexity of \cbs's high-level search. 
Using this approach we obtain a new general bound of~${\bigo((\agents \mddsize)^{\agents \optcost})}$.
When using $\mddsize = n \optcost$, we obtain the new induction-based bound of~${\bigo((\agents n \optcost)^{\agents \optcost})}$.

Using a generating functions-based bound, we obtain a tighter bound on the recurrence which, in turn, provides a tighter bound for \cbs. 
Observing that there exists a linear dependency between the number of negative and positive constraints, allows us to achieve further improvement, and eventually obtain a bound of~$\bigo{\left((e n)^{\agents \optcost} \right)}$, which improves the original bound on the algorithm by a significant factor for a wide range of standard benchmarks.

We believe that the recurrence relation can be further improved, in order to better express the real conditions of a worst-case scenario. An immediate step would be to try and account for tighter dependencies between the number of constraints that are being eliminated once a single constraint is applied.
In addition, revisiting the conditions on the recursion's parameters may allow to tighten the upper-bound on the recursion, and in turn, on the complexity of \cbs.

It is important to note that the new bounds are still somewhat loose and present a worst-case analysis.
However, our analysis paves the way to better pinpoint the parameters that govern (in the worst case) the algorithm’s computational complexity as well as analyze the complexity when restricted to certain settings.
Moreover it provides a general methodology that can be used to analyze different variants of the MAPF problem. 
For example, in the next sections we show how to seamlessly account for edge constraints (Sec.~\ref{subsec:edge-constraints}) as well as for settings that optimize the Sum-of-Costs objective (Sec.~\ref{subsec:soc}).


\subsection{\cbs with Edge-Constraints}
\label{subsec:edge-constraints}
Recall that the analysis we performed in Sec.~\ref{sec:mdd} and~\ref{sec:rec-analysis} (as well as the original analysis) accounted for vertex constraints only.
We now show simple approaches to account for edge constraints using the existing analysis.
Accounting for edge constraints directly is left for future work.

\subsection*{Counting edge constraints as vertex constraints}
Notice that an edge constraint implicitly defines two vertex constraints (forcing agent $a_i$ to traverse $(u,v)$ at time~$t$ corresponds to the constraint that $a_i$ has to be in vertex $u$ and~$v$ at times $t$ and $t+1$, respectively).
Thus, we can simply increase the number of vertex constraints by the number of possible edge constraints (which is twice the number of edges as each edge can be traversed in both directions).
Specifically, on 2D-grids, each node has at most $4$ outgoing and incoming edges, therefore, using the original analysis the number of negative constraints would increase to~${n \agents \optcost +8n \agents \optcost = 9n \agents \optcost}$ in the worst-case. 
For the general case where~${|E| \leq n^2}$, the number of negative constraints would increase to~${(2n^2 + n)\cdot\agents \optcost}$ in the worst-case. 
We emphasize that while the increase may seem negligible, the actual worst-case complexity is exponential in the number of negative constraints and the additional constraints would appear in the exponent of the original bound.
For example, the original bound of $\bigo(2^{nk\optcost})$ would be $\bigo(2^{9 \cdot nk \optcost})$.

A similar approach can be applied to the recursion-based bounds presented in Sec.~\ref{sec:rec-analysis}.
We can consider a total number of negative constraints of $\recneg = 9 n \agents \optcost$, which also includes the negative edge-constraint.
One can show that Eq.~\ref{eq:recurrence} still upper-bounds the number of expanded nodes in \cbs's high-level search.
Thus, by placing $\recneg = 9 n \agents \optcost$ in Prop.~\ref{prop:cbs-recursion-est-linear} we obtain a bound of $\bigo{\left(\left(9en\right)^{\agents \optcost}\right)}$ (the value $n_0$ from which this approximation to the recursion's value holds slightly changes as well).

It is important to note that accounting for edge constraints increases the relative improvement of the bounds we present over the original bound. This is because in our bound the additional work is reflected by increasing the base of the exponent and not the exponent itself as in the original bound.
\subsection*{Counting  \mdd edges}
In Cor.~\ref{cor:mdd_bound} the size of an \mdd is defined as the number of \mdd vertices. However, if we wish to account for edge constraints, the same result holds simply by changing the definition of the size of an \mdd  to be the number of \mdd vertices \emph{and} edges.
For instance, on a 4-connected grid, the maximal (outgoing) degree of each  \mdd node is five.
Therefore, the total number of \mdd edges is bounded by five times the number of \mdd vertices.
Using Eq.~\ref{eq:mdd_size}, the total number of vertex and edge constraints is bounded by $(1+5) \cdot \bigo\left(\optcost^3\right)$. 
We then update Claim.~\ref{claim:first_bound} to incorporate edge constraints, and get a bound of $2^{\bigo\left(\agents \optcost^3\right)}$ (with a small hidden constant factor slightly larger than 1).

\subsection{Analysis for the Sum-of-Costs (SoC) Objective}
\label{subsec:soc}
The bounds we provided (Sec.~\ref{sec:mdd} and~\ref{sec:rec-analysis}) were obtained and expressed using $\optcost$---an upper-bound on a single agent's path length in an optimal solution. 
In the setting where we seek to minimize the makespan, $\optcost$ is indeed an optimal solution's cost. 
Unfortunately, this is not the case for the SoC objective.

One way to use our results when using SoC as our optimization objective is to observe that $\agents \optcost$ is an upper-bound for the optimal solution's cost.
Namely, denote $\optcost' = k\optcost$ and use $\optcost'$ instead of $\agents \optcost$ in all the bounds presented.
For example, our generating function-based analysis for the SoC objectives yields the bound of~$\bigo{\left((e n)^{\optcost'} \right)}$. 
This is a slightly looser bound  expressed with an upper-bound on a single-agent's solution's optimal cost.

\subsection{Future Improvements for \cbs}
Throughout this paper we presented several observations, which improve our understanding regarding the hardness of the MAPF problem. For instance, in Sec.~\ref{sec:mdd} we discuss the dependency of the graph’s size on the problem’s complexity, and use it later to provide refined upper-bounds. We provide a new understanding that the worst-case complexity of \cbs depends more on the graph's radius rather on the graph's size.
Another such observation is reflected in the incorporation of positive constraints in our recurrence-based analysis. It focuses on the importance of positive constraints by demonstrating the huge reduction in the search tree’s size. 

We hope that these observations would allow, in addition to improving the theoretical analysis of \cbs, to be used to improve the algorithm in practice.
One possible way to approach this task is to use these observations to develop heuristics. For instance, by favoring expansion of \ct nodes with a large number of positive constraints or with a minimal amount of approximated work that is remained to be done by the algorithm.

\section*{Acknowledgments}
We wish to thank Roni Stern and the anonymous reviewers of earlier versions of this paper for helpful discussions and comments that contributed to this paper.

\newpage

\begin{appendices}
\section{Generating Functions Based Analysis for Recurrence Asymptotic Approximation} \label{appendix:gen-func-method}
\vspace{1.0mm}
\subsection*{The Recurrence Relation Generating Function}
Eq.~\ref{eq:recurrence} presents a recurrence relation $\functwo{T}{\recneg}{\recpos}$ which forms an upper-bound on the complexity of the high-level search of the \cbs algorithm.
In order to get an estimated upper-bound on $T$, we solve the recurrence corresponding to the case in which all inequalities in Eq.~\ref{eq:recurrence} are tight.

The first step in our analysis requires finding the generating function of  $\functwo{T}{\recneg}{\recpos}$. 
This means that $\functwo{T}{\recneg}{\recpos}$ is the coefficient of $x^{\recneg} y^{\recpos}$ in $F$: 
\begin{equation} \label{eq:gen-func-general}
    \functwo{F}{x}{y} = \sum_{\recneg,\recpos \geq 0}{\functwo{T}{\recneg}{\recpos}x^{\recneg} y^{\recpos}}.
\end{equation}
Obtaining the exact form of $\functwo{F}{x}{y}$  is done using the following equation (for further reading about finding generating function for recurrence with multiple variable, we refer the readers to Chapter~{1.5} in \citet{generatingFuncions}):
\begin{equation*}
    \begin{split}
        \sum_{\recneg,\recpos \geq 0}{\functwo{T}{\recneg+2}{\recpos+1}x^{\recneg} y^{\recpos}} &=
        \sum_{\recneg,\recpos \geq 0}{\functwo{T}{\recneg+1}{\recpos+1}x^{\recneg} y^{\recpos}} \\
        &+\sum_{\recneg,\recpos \geq 0}{\functwo{T}{\recneg}{\recpos}x^{\recneg} y^{\recpos}} \\
        &+\sum_{\recneg,\recpos \geq 0}{x^{\recneg} y^{\recpos}}.
    \end{split}
\end{equation*}
The computation involves algebraic simplification and subtitution of the sums with $\functwo{F}{x}{y}$ according to Eq.~\ref{eq:gen-func-general}:
\begin{equation*}
    \begin{split}
        &\frac{1}{x^2y} \left(\functwo{F}{x}{y} - 1 -\frac{y}{1-y} - \frac{x}{1-x} - \frac{3xy}{1-y} \right) =  \\
        &\,\,\,\,\,\, \functwo{F}{x}{y} +\frac{1}{xy}\cdot \left(\functwo{F}{x}{y} - 1 - \frac{y}{1-y} - \frac{x}{1-x}\right) + \\ &\,\,\,\,\,\,\,\,\frac{1}{\left(1-x\right)\left(1-y\right)},
    \end{split}
\end{equation*}
which, eventually gives the following generating function for the recurrence:
\begin{equation} \label{eq:generating-function}
    \functwo{F}{x}{y}=\frac{1-x+2xy-x^2y}{(1-x)(1-y)(1-x-x^{2}y)}.
\end{equation}

\vspace{1.0mm}
\subsection*{Recurrence Approximation Analysis}
\subsubsection*{Critical Points}
In order to get a closed-form formula which approximates the value of $\functwo{T}{\recneg}{\recpos}$ we follow the analysis by \citet{PemantleW08}.
As stated in Sec.~\ref{sec:bound-with-gf}, this method does not allow to prove the bound's correctness for a recurrence relation of the form that we have, but it does allow us to obtain an estimated bound, which we empirically evaluate to be percise.

The first step is to express the function as the ratio of~${F = G/H}$, that is:
\begin{equation*}
    \begin{split}
        &\functwo{G}{x}{y} = 1-x+2xy-x^2y, \\ &\functwo{H}{x}{y} = (1-x)(1-y)(1-x-x^{2}y).
    \end{split}
\end{equation*}
We use $\partderv{H}{x},\partderv{H}{y},\partderv{H}{xx},\partderv{H}{yy},\partderv{H}{xy}$ for the partial derivatives of~$H$ with respect to the sub-scripted variables:
\begin{equation} \label{eq:partial-derivatives}
    \begin{split}
        &\partderv{H}{x} = (1-y)(3x^2y-2x(y-1)-2), \\
        &\partderv{H}{y} = (1-x)(x^2(2y-1)+x-1), \\
        &\partderv{H}{xx} = -2(y-1)((3x-1)y+1), \\
        &\partderv{H}{yy} = -2x^2 (x-1), \\
        &\partderv{H}{xy} = x^2(3-6y)+4x(y-1)+2.
    \end{split}
\end{equation}

We need to find the critical points which are given by the solutions for the following system in the positive quadrant~{($x,y>0$)}:
\begin{equation*} \label{eq:crititcal-system}
    \begin{cases}
        H = 0 \\
        s x \partderv{H}{x} = r y \partderv{H}{y}.
    \end{cases}
\end{equation*}
There are three solutions to the system in the positive quadrant:
\begin{equation} \label{eq:appx-critical-points}
    \begin{split}
        &(x_1, y_1) = \left(\frac{-1+\sqrt{5}}{2}, 1\right) \\
        &(x_2, y_2) = \left(1, 1\right) \\
        &(x_3, y_3) = \left(\frac{\recneg-2\recpos}{\recneg-\recpos}, \frac{\recpos(\recneg-\recpos)}{(\recneg-2\recpos)^2}\right).
    \end{split}
\end{equation}

The third solution is missing when $\recneg = \recpos$ or $\recneg = 2\recpos$.
Suppose that $m<\recneg/\recpos<M$ for $m>0,M<\infty$. Then according to \citet{PemantleW08}, each critical point contributes some asymptotic factor to the possible upper-bound of the recursion. We denote the matching contribution factor of each point~$\point{x_i}{y_i}$ by $\functwo{T_i}{\recneg}{\recpos}$.

The contribution by each point is given by a different formula, depending on point's multiplicity.

Let~${\mathcal{H}=\partderv{H}{xx}\partderv{H}{yy} - \partderv{H}{xy}}$. Then the contribution $T_i$ of a \emph{multiple point}~$\point{x_i}{y_i}$ is given by:
\begin{equation} \label{eq:mult-points}
    \functwo{T_i}{\recneg}{\recpos} = x_i^{-\recneg} y_i^{-\recpos} \frac{\functwo{G}{x_i}{y_i}}{\sqrt{-x_i^2y_i^2\functwo{\mathcal{H}}{x_i}{y_i}}}.
\end{equation}

Let:
\begin{equation}\label{eq:q-equation}
    \begin{split}
        \functwo{Q}{x}{y} = &-x\partderv{H}{x}y^2\partderv{H}{y}^2 - y\partderv{H}{y}x^2\partderv{H}{x}^2 - y^2\partderv{H}{y}^2x^2\partderv{H}{xx} \\&- x^2\partderv{H}{x}^2y^2\partderv{H}{yy} + 2x\partderv{H}{x}y\partderv{H}{y}xy\partderv{H}{xy},
    \end{split}
\end{equation}
Then the contribution $T_i$ of a \emph{single point}~$\point{x_i}{y_i}$ is given by:
\begin{equation} \label{eq:simple-points}
    \functwo{T_3}{\recneg}{\recpos} = \frac{\functwo{G}{x_3}{y_3}}{\sqrt{2\pi}} {x_3^{-\recneg}} {y_3^{-\recpos}} \sqrt{\frac{-y_3 \functwo{\partderv{H}{y}}{x_3}{y_3}}{\recpos \functwo{Q}{x_3}{y_3}}}.
\end{equation}
In case that $H$ doesn't contain quadratic factors, a point is considered ``single'' if the value of the gradient of $H$ at the point is not zero. Otherwise, it is considered ``multiple''.

\subsubsection{Multiple Points' Contribution}
The points $\point{x_1}{y_1}$ and~$\point{x_2}{y_2}$ from Eq.~\ref{eq:appx-critical-points} are both multiple points.
The corresponding values of $\mathcal{H}$ and $G$ functions for those points are:
\begin{equation*} \label{eq:cal-h-results}
    \begin{split}
        &\functwo{\mathcal{H}}{x_1}{y_1} = \frac{15\sqrt{5}}{2}-\frac{35}{2},
        \functwo{G}{x_1}{y_1} = \sqrt{5}-1, \\
        &\functwo{\mathcal{H}}{x_2}{y_2} = -1,
        \functwo{G}{x_2}{y_2} = 1.
    \end{split}
\end{equation*}
The result for the points' contribution, given by substituting the aforementioned values in Eq.~\ref{eq:mult-points} is:
\begin{equation} \label{eq:mult-points-res}
    \begin{split}
        &\functwo{T_1}{\recneg}{\recpos} \sim \left( \frac{4}{3\sqrt{5}-5} \right) \cdot\left( \frac{1+\sqrt{5}}{2} \right)^{r}  \\
        &\functwo{T_2}{\recneg}{\recpos} = \frac{\functwo{G}{1}{1}}{\sqrt{-\mathcal{H}}} = 1.
    \end{split}
\end{equation}

\subsubsection{Single Points' Contribution}
The point $\point{x_3}{y_3}$ from Eq.~\ref{eq:critical-points} is a single point.
According to Eq.~\ref{eq:q-equation} and Eq.~\ref{eq:simple-points} its contribution is:
\begin{equation}\label{eq:simple-point-res}
    \begin{split}
        \functwo{T_3}{\recneg}{\recpos} &= 
        \frac{(\recneg-\recpos)^{\recneg-\recpos}}{(\recneg-2\recpos)^{\recneg-2\recpos}\cdot \recpos^{\recpos}} 
        \cdot 
        \frac{2\recpos}{\recneg-2\recpos} \cdot 
        \sqrt{\frac{\alpha}{2\pi}},
    \end{split}
\end{equation}
where $\alpha = \bigo(\frac{\recneg^2}{\recpos})$.
%

\subsubsection{Obtained Recurrence Approximation}
%
%

In Sec.~\ref{sec:bound-with-gf}, we presented the motivation for computing a bound for the specific case where there is a linear dependency between $\recneg$ and~$\recpos$, namely, $\recneg = n \recpos$ for a given $n\in \mathbb{N}$. 
For this case, by simply substituting $\recneg$ and reducing the fractions, we get the following factors by each critical point:
\begin{equation} \label{eq:appx-cbs-recursion-est-linear}
    \begin{split}
        &\functwo{T_1}{n\recpos}{\recpos} = 1, \\&
        \functwo{T_2}{n\recpos}{\recpos} = \left( \frac{4}{3\sqrt{5}-5} \right) \cdot \left( \frac{1+\sqrt{5}}{2} \right)^{n\recpos}, \\& 
        \functwo{T_3}{n\recpos}{\recpos} = \left(\frac{(n-1)^{n-1}}{(n-2)^{n-2}} \right)^\recpos 
        \cdot 
        \frac{2}{n-2}
        \cdot
        \sqrt{\frac{\beta}{2\pi \recpos}},
    \end{split}
\end{equation}
where $\beta = \bigo{\left( n^2 \right)}$.

We care for finding the component in Eq.~\ref{eq:appx-cbs-recursion-est-linear} which closely approximates the value of $\functwo{T}{n\recpos}{\recpos}$.
First, observe that~${\frac{2}{n-2} \cdot \sqrt{\frac{\beta}{2\pi}}}$ is asymptotically $\bigo(1)$.
Now, let $n_0$ be the solution for
\begin{equation*}
    \frac{(n-1)^{n-1}}{(n-2)^{n-2}} = \left( \frac{1+\sqrt{5}}{2} \right)^{n},
\end{equation*}
which is $n_0 = \frac{\sqrt{5} + 2}{2} \approx 3.618033$.
Note that we intentionally omit the multiplication by $1/\sqrt{\recpos}$, which is insignificant asymptotically in this case.

Each of the critical points' contribution is used to approximate the recursion value for a certain range of values of $n$.
Empirically, we get that if ${n < n_0}$ then the contribution of the multiple points gives a tight approximation:
\begin{equation*}
    \functwo{T}{n \recpos}{\recpos} \sim 1 + \left( \frac{4}{3\sqrt{5}-5} \right) \cdot \left( \frac{1+\sqrt{5}}{2} \right)^{n\recpos},
\end{equation*}
which is the same, asymptotically, as the $ns$'th Fibonacci number.

If $n \geq n_0$ then the approximation is given by the contribution of the single point, therefore:
\begin{equation*}
    \functwo{T}{n \recpos}{\recpos} \sim \left( \frac{(n-1)^{n-1}}{(n-2)^{n-2}} \right)^\recpos \cdot \frac{1}{\sqrt{\recpos}},
\end{equation*}
where we assume that $\sim$ ignores constant factor within the approximated expression.

We are interested in the asymptotic behavior of this approximation, therefore we focus on the formula obtained for~${n \geq n_0}$.
We notice that $\frac{(n-1)^{n-1}}{(n-2)^{n-2}} < e n$, thus, in conclusion we get the following upper-bound on $T$:
\begin{equation} \label{eq:final-result}
    \functwo{T}{n \recpos}{\recpos} \sim  \frac{\left( e n \right)^\recpos}{\sqrt{\recpos}}.
\end{equation}

\end{appendices}

\bibliography{references.bib}
\end{document}